\definecolor{subtler}{rgb}{1,0,0.1}    
\begin{document}

\title{ Performance Evaluation of the Random Replacement Policy for Networks of  Caches}

\newtheorem{theorem}{Theorem}[section]
\newtheorem{lemma}[theorem]{Lemma}
\newtheorem{prop}[theorem]{Proposition}
\newtheorem{corol}[theorem]{Corollary}
\newtheorem{remark}{Remark}[section]

\def\rhoa{\rho_\alpha}
\def\ral{r^\alpha}
\def\Zipf{\frac{A}{r^\alpha}}

\def\ie{\emph{i.e.}}
\def\eg{\emph{e.g.}, }
\def\etal{\emph{et. al.}}

\author{Massimo Gallo, Bruno Kauffmann, Luca Muscariello,\\ 
 Alain Simonian, Christian Tanguy\\
\affaddr{Orange Labs, France Telecom}\\
\email{firstname.lastname@orange.com}}

\maketitle

\begin{abstract}
The overall performance of content distribution networks as well
as recently proposed information-centric networks rely on both
memory and bandwidth capacities. In this framework, the hit ratio
is the key performance indicator which captures the bandwidth /
memory tradeoff for a given global performance. 
\\
This paper focuses on the estimation of the hit ratio in a
network of caches that employ the Random replacement policy. 
Assuming that requests are independent and identically
distributed, general expressions of miss 
probabilities for a single Random cache are provided as well as exact results for specific popularity distributions.
Moreover, for any Zipf popularity distribution with exponent $\alpha > 1$, we obtain asymptotic equivalents for the miss probability in the case of large cache size.  \\
We extend the analysis to networks of Random caches, when the topology is either a line or a homogeneous tree. In that case, approximations for miss probabilities across the network are
derived by assuming that miss events at any node occur independently in time; the obtained results are compared to the same network using the Least-Recently-Used discipline,
already addressed in the literature. We further analyze the case of a mixed tandem cache network where the two nodes employ either Random or Least-Recently-Used policies. 
In all scenarios, asymptotic formulas and approximations are extensively 
compared to simulations and shown to perform very well. Finally, our results enable us to propose recommendations for cache replacement disciplines in a network dedicated to content
distribution. 
These results also hold for a cache using the First-In-First-Out policy.
\end{abstract}
 
\category{C.2.1}{Computer-Communication Networks}{Network Architecture and Design}[Packet-switching networks]

\terms{Performance Evaluation, Caching Networks}
\keywords{Content Distribution Networks, Information-Centric Networking, Cache Replacement Policies, Asymptotic Analysis}

\section{Introduction}
\label{sec:intro}
Communication networks use an ever increasing amount of data storage to cache information in transit from a source to a destination. Data caching is an auxiliary function, where a given
piece of data is temporarily stored into a memory.  
The cache might then be queried at any time to provide an object that is possibly stored in that memory. Caches are typically located downstream a bandwidth bottleneck, \eg{} a
communication link with limited bandwidth or a shared bus in a network of chips. This storage then allows to increase the throughput of the data path and to decrease the rate of congestion events. 

Many communication networks fall into such a model: the Internet,
Content Delivery Networks (CDNs), Peer-To-Peer (P2P) as well as
networks on chips. In fact, the increasing amount of content
delivered to Internet users has pushed the use of Web caching
into communication models based on distributed caching such as
CDNs. New information-centric network architectures
\cite{Jacobson2009Conext, Koponen:2007, sail} have been recently
proposed in order to have built-in network storage as a
fundamental element of the underlying communication model.
Content storage becomes a fundamental resource in such networks,
aiming at minimizing content delivery time under an ever increasing
demand that cannot simply be satisfied by increasing link bandwidth.
Moreover, the use of network storage, enabling one to cache content in order to bypass  bandwidth bottlenecks, appears cost effective as memory turns out to be cheaper
than transmission capacity.

One of the fundamental operations of a cache is defined by its replacement policy which   determines the object to be removed from the cache when the latter is full. 
Many replacement policies are based on content popularity, with
significant cost for managing the sorted lists. This is the case,
in particular, for the Least Frequently Used (LFU) policy and
more sophisticated variants based on it. On the contrary, Most
Recently Used (MRU), Least Recently Used (LRU),
First-In-First-Out (FIFO) and Random (RND)
policies have the compelling feature to replace cached objects
with constant delay $O(1)$. In-network storage, as envisaged in
the new architectures mentioned above, may require packet-level
caching at line rate; current  high-speed routers running complex
replacement policies might not, however, sustain such high line
rates \cite{varvelloICN2011}. 
In this framework, the RND and FIFO policies can therefore be
seen as presenting the least possible  complexity. In fact, RND
and FIFO requires 
less memory access per packet than LRU or MRU, and it has been shown \cite{varvelloICN2011} that this advantage is critical for sustaining high-speed caching with current memory technology. 

In this paper, we address performance issues of caching networks running the RND replacement policy. We mainly focus on the analytical characterization of the miss probabilities under the Independent Request Model (IRM) assumptions when the number of available objects is infinite. We first provide exact formula for the miss rate in the case of a light-tailed distribution content popularity, namely geometric, and for specific Zipf popularity distributions. Proposition~\ref{ASYMPT} then proves that when the popularity distribution follows a general power-law with decay exponent $\alpha > 1$, the miss probability is asymptotic to $A \rho_\alpha C^{1-\alpha}$ for large cache size $C$, where constants $A$ and $\rho_{\alpha}$ depend on $\alpha$ only. In Proposition~\ref{mrAsympt}, we extend that result to miss probabilities conditioned by the object popularity rank. 

A second major contribution of the paper is given by Proposition~\ref{APPROXNCRK}, where we evaluate the performance of network of caches under the RND policy, for both linear and homogeneous tree networks and asymptotically Zipf popularity distributions.
An approximate closed formula for the miss probability across the
network is provided and compared to corresponding estimates for
LRU cache networks. The analysis is also extended to the mixed
tandem cache network where one cache employs LRU and the other
one uses RND. Note that the choice of FIFO replacement policy
instead of RND does not impact any result.
 
The specific focus on Zipf distributions, or more generally power-law distributions, made in this paper is motivated by numerous studies on Internet object popularity, starting from the late 90's experiments on World Wide Web documents (\cite{Barford},\cite{Breslau}), the content stored in enterprises media servers (\cite{Cherkasova},\cite{Costa}) and to recent studies on Internet media content (\cite{Cha}, \cite{Mitra:2011:CWV:1961659.1961662}). While other content popularity distributions might be considered, we do not provide here a complete review of the literature on Internet content popularity characterization; the above references confirm the pertinence of Zipf distributions for studying caching performance.

The remainder of the paper is organized as follows. Section \ref{sec:related} presents related work on the analytical performance evaluation of caching systems. Section \ref{sec:single_cache} analyzes the RND cache replacement policy and its comparison to LRU for a single cache; these analytical results are compared with exact numerical and simulation results in Section~\ref{sec:num_single_cache}. Section~\ref{sec:network_cache}  
reports the approximate analysis of the network of RND caches for two topologies, namely the line and the tree. Numerical and simulation results for the network case are
reported in Section \ref{sec:num_network_cache}. Section \ref{sec:mixed} further evaluates the tandem cache system with one LRU and one RND cache, using numerical and simulation results. Section~\ref{sec:conclusion} concludes the paper (several proofs are detailed in the Appendix). 

\section{Related Work}
\label{sec:related}
There is a significant body of work on caching systems and their associated replacement policies that we will not attempt to review thoroughly. Here we only report the literature focusing on the analytical characterization of the performance of such systems.  

The most frequently analyzed replacement policy is LRU whose
performance is evaluated considering the move-to-front rule,
consisting in putting the latest requested object in front 
of a list; a miss event for a LRU cache with finite size takes
place when the position (also referred to as search cost) of an
object in the list is larger than that size. Under the
Independent Request Model, \cite{McCabe} calculates the expected
search cost and its variance for finite lists. An explicit
formula is given in \cite{Burville,Hendricks} for the probability
distribution of that cost; such a formula is, however,
impractical for numerical evaluation in case of large object
population and large cache size. Integral representations
obtained in \cite{FillHolst1996, Flajolet1992birthday} using the
Laplace transform of the search cost function reduce the problem
to numerical integration.  

An asymptotic analysis of LRU miss probabilities \cite{JelenkovicAAP1999} for Zipf and Weibull popularity distributions provides simple closed formulas. Extensions to correlated requests are obtained in \cite{Coffman1999MMPP, Jelenkovic2004DependentRequests},
showing that short term correlation does not impact the asymptotic results derived in \cite{JelenkovicAAP1999}; the case of variable object sizes is also considered in
\cite{Jelenkovic2004Size}. The analytical evaluation of the RND
policy has been first initiated by \cite{Gelenbe1973} for a
single cache where the miss probability is given by a general
expression. To the best of our knowledge, its application to
specific popularity distributions has, however, not yet been
envisaged together with its numerical tractability for large
object population and cache size.   

We are aware of few papers that address the issue of networks of caches.
A network of LRU caches has been analyzed by \cite{towsley2010approximate} using the  approximation for the miss probability at each node obtained in \cite{Dan1990}, and
assuming that the output of a LRU cache is also IRM; miss probabilities can then be obtained as the solution of an iterative algorithm that is proved to converge. The results of \cite{JelenkovicAAP1999} are extended in \cite{muscaITC2011} to a two-level request process where objects are segmented into packets, when assuming that the LRU policy applies to packets. The analysis is applied to line and tree topologies with in-path caching. Moreover, \cite{muscaICN2011} extends \cite{muscaITC2011} when network links have finite bandwidth. 

\section{Single cache model}\label{sec:single_cache}

In this section, we address the single cache system with RND replacement policy, deriving analytic expressions of the miss probability together with asymptotics for large cache size. To avoid technical results at first reading, the reader may quickly go through the notation of Section 3.1 and directly skip to main results given in Propositions \ref{ASYMPT} and \ref{mrAsympt}.

\subsection{Basic results}

Consider a cache memory with finite size which is offered requests for objects. If the cache size is attained and a request for an object cannot be satisfied (corresponding to a \textit{miss} event), the requested object is fetched from the repository server and cached locally at the expense of replacing some other object in the cache. The object replacement policy is assumed to follow the RND discipline, \ie whenever a miss occurs, the object to be replaced is chosen at random, uniformly among the objects present in cache. 

We consider a discrete time system: at any time $t \in \mathbb{N}$, the $t$-th requested object requested from the cache is denoted by $R(t) \in \{1, 2, ..., N\}$, where $N$ is the total 
number of objects which can be possibly requested (in the present
analysis, the set of all possible documents is considered to be
invariant in time). We assume that all $N$ objects are ordered
with decreasing popularity, the probability that object $r$ is
requested being $q_r$, $1 \leq r \leq N$. 
In the following, we consider the \textit{Independent
  Reference Model}, where  
variables $R(t)$, $t \in \mathbb{N}$, are  mutually
independent and identically distributed with common distribution
defined by 
\[
\mathbb{P}(R = r) = q_r, \; \; \; 1 \leq r \leq N.
\]

Let $C \leq N$ be the cache capacity and denote by $\mathcal{N}_C$ the set of all ordered subsets $\{j_1, ..., j_C\}$ with $1  \leq j_k  \leq N$, $k \in \{1, ..., C\}$, and $j_k < j_\ell$ for $k < \ell$. Define the cache state at time $t \in \mathbb{N}$ by the vector $\mathbf{S}(t)$; $\mathbf{S}(t)$ may equal any configuration $\mathbf{s} = \{j_1, ..., j_C\} \in \mathcal{N}_C$. In the following, we let 
\begin{equation}
G(C) = \sum_{\{j_1,...,j_C \} \in \mathcal{N}_C} \; q_{j_1}...q_{j_C}
\label{QCN}
\end{equation}
with $G(0) = 1$. Let $M(C)$ finally denote the stationary miss probability calculated over all possibly requested objects.  
\\
\indent
It is shown \cite{Gelenbe1973} that the cache configurations $(\mathbf{S}(t))_{t \in \mathbb{N}}$ define a reversible Markov process with stationary probability distribution given by 
\begin{equation}
\mathbb{P}(\mathbf{S} = \mathbf{s}) = \frac{1}{G(C)} \prod_{j \in \mathbf{s}} q_j, \; \; \mathbf{s} \in \mathcal{N}_C;
\label{1stcomp}
\end{equation}
moreover (\cite{Gelenbe1973}, Theorem 4), probability $M(C)$ equals
\begin{equation}
M(C) = \frac{\displaystyle \sum_{\{j_1,...,j_C \} \in \mathcal{N}_C} q_{j_1}...q_{j_C} \sum_{r \notin \{j_1, ..., j_C\}}q_r}{\displaystyle \sum_{\{j_1, ...,j_C\} \in \mathcal{N}_C} q_{j_1}...q_{j_C}}.
\label{CM}
\end{equation}
In \cite{Gelenbe1973}, it is also shown that the cache
configuration stationary probability distribution and miss rate
probability are identical in the case of a FIFO cache, hence our results
apply for the FIFO policy as well.

Expression (\ref{CM}) can be actually written in terms of normalizing constants $G(C)$ and $G(C+1)$ only; this will give formula (\ref{CM}) a compact form  suitable for the derivation of both exact and asymptotic expressions for $M(C)$. 

\begin{lemma}
The miss rate $M(C)$ is given by
\begin{equation}
M(C) = (C+1) \frac{G(C+1)}{G(C)}
\label{CMbis}
\end{equation}
with $G(C)$ defined in (\ref{QCN}).
\label{MAlter}
\end{lemma}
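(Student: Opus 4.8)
The plan is to start from the explicit expression (\ref{CM}) for $M(C)$ and to simplify its numerator by interchanging the order of summation. The denominator of (\ref{CM}) is already $G(C)$ by definition (\ref{QCN}), so the only work is to show that the numerator equals $(C+1)\,G(C+1)$.

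First I would rewrite the numerator as a single sum over pairs,
\[
\sum_{\{j_1,\dots,j_C\}\in\mathcal{N}_C}\ \sum_{r\notin\{j_1,\dots,j_C\}} q_{j_1}\cdots q_{j_C}\,q_r ,
\]
each term being the product of the $q$-values over the $C+1$ pairwise distinct indices $j_1,\dots,j_C,r$. The key observation is that a pair $\bigl(\{j_1,\dots,j_C\},\,r\bigr)$ with $r\notin\{j_1,\dots,j_C\}$ is in bijection with a pair consisting of an ordered subset $\mathbf{t}=\{i_1,\dots,i_{C+1}\}\in\mathcal{N}_{C+1}$ together with a distinguished element $r\in\mathbf{t}$: indeed $\mathbf{t}=\{j_1,\dots,j_C\}\cup\{r\}$, and conversely $\{j_1,\dots,j_C\}$ is recovered by deleting $r$ from $\mathbf{t}$ and re-sorting. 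Since the summand depends on $\mathbf{t}$ only — it equals $\prod_{i\in\mathbf{t}}q_i$ no matter which element is distinguished — I can sum first over the $C+1$ choices of $r$ inside a fixed $\mathbf{t}$ and then over $\mathbf{t}\in\mathcal{N}_{C+1}$, which yields $(C+1)\sum_{\mathbf{t}\in\mathcal{N}_{C+1}}\prod_{i\in\mathbf{t}}q_i=(C+1)\,G(C+1)$. Dividing by the denominator $G(C)$ then gives (\ref{CMbis}). I would also record the boundary case $C=N$: there $\mathcal{N}_{C+1}=\emptyset$, so $G(C+1)=0$, consistent with the trivial fact $M(N)=0$ (a cache holding every object never misses).

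I do not anticipate a genuine obstacle; the only point deserving a moment's care is verifying that the correspondence above is a true bijection — every ordered $C$-subset together with an external index arises exactly once, and vice versa — which is the elementary identity $|\mathcal{N}_C|\,(N-C)=|\mathcal{N}_{C+1}|\,(C+1)$ refined so as to carry the $q$-weights along. Everything else is bookkeeping.
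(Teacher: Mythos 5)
Your proposal is correct and is essentially the paper's own argument: the paper also identifies the denominator with $G(C)$ and shows the numerator equals $(C+1)G(C+1)$ by splitting the inner sum over $r\notin\{j_1,\dots,j_C\}$ into the $C+1$ gaps of the ordered tuple, which is exactly your bijection between pairs $(\{j_1,\dots,j_C\},r)$ and $(C+1)$-subsets with a distinguished element. The only difference is presentational (gap-by-gap summation versus an explicit double-counting bijection), and your added remark on the boundary case $C=N$ is a harmless extra.
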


\begin{proof} Any state $\mathbf{s} = \{j_1, ..., j_C\} \in \mathcal{N}_C$ corresponding to a unique sequence $1 \leq j_1 < j_2 < ... < j_C \leq N$ with $1 \leq j_k \leq N$, the denominator of (\ref{CM}) therefore equals $G(C)$. The numerator of (\ref{CM}) can in turn be expressed as
\begin{align}
& \sum_{1 \leq j_1 < ... < j_C \leq N} q_{j_1}...q_{j_C} \sum_{r \notin \{j_1, ..., j_C\}}q_j 
\nonumber \\
& = \sum_{1 \leq j_1 < ... < j_C \leq N} q_{j_1}...q_{j_C} 
\nonumber \\
& \times \left ( \sum_{1 \leq r < j_1}q_r + ... + \sum_{j_{C-1} < r < j_C}q_r + \sum_{j_C < r \leq N} q_r \right )
\nonumber \\
& = (C+1) \sum_{1 \leq r < j_1 < ... < j_C \leq N} q_rq_{j_1}...q_{j_C} = (C+1)G(C+1)
\nonumber
\end{align}
and expression (\ref{CMbis}) of $M(C)$ follows
\end{proof} 

The latter results readily extend to the case when the total number $N$ of objects is infinite, since the series $\Sigma_{j \geq 1} q_j$ is finite. The calculation of coefficients $G(C)$, $0 \leq C \leq N$, is now performed through their associated generating function $F$ defined by
\[
F(z) = \sum_{0 \leq C \leq N} G(C) z^C, \; \; z \in \mathbb{C},
\]
for either finite or infinite population size $N$ (as $M(C) \leq 1$, Lemma \ref{MAlter} entails that $G(C+1)/G(C) \leq 1/(C+1)$ and the ratio test implies that the power series defining $F(z)$ has infinite convergence radius). We easily obtain the second preliminary result.

\begin{lemma}
The generating function $F$ is given by 
\begin{equation}
F(z) = \prod_{1 \leq r \leq N} \left ( 1 + q_r z \right )
\label{genf}
\end{equation}
for all $z \in \mathbb{C}$.
\label{GEN}
\end{lemma}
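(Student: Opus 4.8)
The plan is to show that the product $\prod_{1 \le r \le N}(1 + q_r z)$, when expanded, has $z^C$-coefficient exactly equal to $G(C)$ as defined in (\ref{QCN}). First I would expand the finite (or, in the infinite-$N$ case, formally convergent) product: since each factor $1 + q_r z$ contributes either the term $1$ or the term $q_r z$, a monomial in the expansion is obtained by selecting, for each index $r$, one of these two options. A monomial of degree $C$ in $z$ therefore corresponds to choosing exactly $C$ indices $j_1 < j_2 < \dots < j_C$ for which the factor $q_r z$ is picked, and the factor $1$ for all remaining indices; this contributes $q_{j_1} q_{j_2} \cdots q_{j_C}\, z^C$. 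Summing over all such choices, the coefficient of $z^C$ is precisely $\sum_{\{j_1,\dots,j_C\} \in \mathcal{N}_C} q_{j_1}\cdots q_{j_C} = G(C)$, which matches the defining series of $F$ term by term.

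For the case of infinite population $N$, I would additionally justify that the infinite product converges (as a function and coefficient-wise) on all of $\mathbb{C}$: since $\sum_{r \ge 1} q_r = 1 < \infty$, the series $\sum_{r} |q_r z| = |z|$ converges for every $z$, so the standard criterion for absolute convergence of infinite products (namely $\sum_r |q_r z| < \infty$) applies and guarantees that $\prod_r (1 + q_r z)$ defines an entire function whose Taylor coefficients are obtained by the same combinatorial expansion as in the finite case. This is consistent with the remark already made in the excerpt that $F(z)$ has infinite radius of convergence.

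The argument is essentially a bookkeeping identity, so I do not anticipate a genuine obstacle; the only point requiring a little care is the interchange of summation and product in the infinite-$N$ setting, which the absolute-convergence bound $\sum_r |q_r z| = |z| < \infty$ handles cleanly. One can therefore conclude that $F(z) = \prod_{1 \le r \le N}(1 + q_r z)$ for all $z \in \mathbb{C}$.
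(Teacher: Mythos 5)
Your proposal is correct and matches the paper's proof, which simply expands the product and identifies the coefficient of $z^C$ with $G(C)$ via definition (\ref{QCN}). Your extra care about absolute convergence of the infinite product when $N = \infty$ is a sound addition, though the paper already covers the convergence issue just before the lemma via the ratio test applied to $G(C+1)/G(C) \leq 1/(C+1)$.
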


\begin{proof} Expanding the latter product and using definition (\ref{QCN}) readily provide the result 
\end{proof}

To further study the single cache properties, let $M_r(C)$ denote the per-object miss probability, given that the requested object is precisely $r \in \{1, ..., N\}$. Defining
\begin{equation}
G_r(C) = \sum_{1 \leq j_1 < ... < j_C \leq N, \; r \notin \{j_1, ...,j_C\}} \; q_{j_1}...q_{j_C}
\label{QCNr}
\end{equation}
with $G_r(0) = 1$, we then have $M_r(C) = \mathbb{P}(r \notin \mathbf{S})$ so that (\ref{1stcomp}) and (\ref{QCNr}) yield
\begin{equation}
M_r(C) = \frac{G_r(C)}{G(C)}.
\label{mr}
\end{equation}

\begin{lemma}
The per-object miss probability $M_r(C)$ for given $r \in \{1, ..., N\}$ can be expressed as
\begin{equation}
M_r(C) = 1 + \sum_{\ell = 0}^{C-1} (-q_r)^{C-\ell} \frac{G(\ell)}{G(C)}.
\label{mrbis}
\end{equation}
The stationary probability $q_r(2)$, $r \in \{1, ..., N\}$, that a miss event occurs for object $r$ is given by
\begin{equation}
q_r(2) = \frac{M_r(C)}{M(C)}q_r
\label{dismissr}
\end{equation}
where $M(C)$ is the averaged miss probability.
\label{GENr}
\end{lemma}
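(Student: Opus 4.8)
The plan is to obtain (\ref{mrbis}) from a one-step recursion for the restricted constants $G_r(C)$, and then to deduce (\ref{dismissr}) from the fact that, under the IRM, the cache content seen by a request is independent of that request.

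For the first part I would start from the combinatorial definition (\ref{QCNr}) of $G_r(C)$ and split the $C$-subsets $\{j_1,\dots,j_C\}$ of $\{1,\dots,N\}$ according to whether they contain the index $r$. Subsets avoiding $r$ contribute exactly $G_r(C)$, while subsets containing $r$ are in bijection with the $(C-1)$-subsets of $\{1,\dots,N\}\setminus\{r\}$ and each carries an extra factor $q_r$, hence contribute $q_r\,G_r(C-1)$. This gives $G(C)=G_r(C)+q_r\,G_r(C-1)$, i.e. $G_r(C)=G(C)-q_r\,G_r(C-1)$, with initial value $G_r(0)=1=G(0)$. Unrolling the recursion yields $G_r(C)=\sum_{\ell=0}^{C}(-q_r)^{C-\ell}G(\ell)$; isolating the $\ell=C$ term, which equals $G(C)$, dividing by $G(C)$, and invoking (\ref{mr}) produces precisely (\ref{mrbis}). (Equivalently, one may observe via Lemma \ref{GEN} that the generating function of $(G_r(C))_C$ is $F(z)/(1+q_r z)$, expand $1/(1+q_r z)=\sum_{k\ge 0}(-q_r z)^k$ near the origin, and read off the coefficient of $z^C$; since $F$ is entire the identity between the resulting polynomials holds for every $C$, also when $N$ is infinite.)

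For the second part I would first make precise what $q_r(2)$ denotes: it is the stationary probability that an arbitrary request is for object $r$ given that this request produces a miss — equivalently, the popularity of $r$ in the miss (output) stream that would feed a downstream cache, whence the notation. The key observation is that in the discrete-time IRM the cache state $\mathbf{S}(t)$ just before the $t$-th request is a deterministic function of $R(1),\dots,R(t-1)$ alone, because the insertion triggered by a miss takes effect only after the request is served; hence $\mathbf{S}(t)$ is independent of $R(t)$, and in stationarity $\mathbf{S}(t)$ has law (\ref{1stcomp}). Therefore $\mathbb{P}(R=r,\ \mathrm{miss})=\mathbb{P}(R=r)\,\mathbb{P}(r\notin\mathbf{S})=q_r\,M_r(C)$ by (\ref{mr}); summing over $r$ recovers $M(C)=\sum_r q_r M_r(C)$ (consistent with (\ref{CM}), and showing the $q_r(2)$ sum to one), and dividing the joint probability by the total miss probability $M(C)$ gives $q_r(2)=q_r\,M_r(C)/M(C)$, which is (\ref{dismissr}).

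The routine part is the unrolling of the linear recursion (or the one-line generating-function computation). The only genuinely delicate point is the probabilistic argument in the second part: one must be careful that ``stationary'' is applied to the cache state \emph{prior} to the request so that (\ref{1stcomp}) is in force, and that this pre-request state is independent of the request itself — which is exactly the property that makes $q_r\,M_r(C)$ the joint ``request $r$ and miss'' probability.
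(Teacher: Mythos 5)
Your argument is correct and essentially the same as the paper's: your combinatorial split of the $C$-subsets according to whether they contain $r$ is just the coefficient-level form of the generating-function identity $F_r(z)=F(z)/(1+q_rz)$ that the paper expands to obtain $G_r(C)=\sum_{\ell=0}^{C}(-q_r)^{C-\ell}G(\ell)$, an equivalence you note yourself in the parenthetical. For (\ref{dismissr}) the paper uses the same Bayes computation; your additional justification that the pre-request stationary cache state is independent of the current request under the IRM (so that $\mathbb{P}(\mathcal{M}\mid R=r)=M_r(C)$) merely makes explicit a step the paper takes for granted.
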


\begin{proof} By definition (\ref{QCNr}), the generating function $F_r(z)$ of coefficients $G_r(C)$, $0 \leq C \leq N$, is given by
\begin{equation}
F_r(z) = \frac{F(z)}{1+q_rz}, \; \; z \in \mathbb{C}.
\label{Fr}
\end{equation}
Expanding the latter ratio as a powers series of $z$ gives
$$
G_r(C) = \sum_{\ell = 0}^C (-q_r)^{C-\ell} G(\ell) 
$$
and provides (\ref{mrbis}) after using definition (\ref{mr}) for $M_r(C)$. Besides, letting  $\mathcal{M}$ denote a miss event, Bayes formula entails
\begin{align}
q_r(2) & = \mathbb{P}(R=r \; |\; \mathcal{M}) =
\mathbb{P}(R=r)\frac{\mathbb{P}(\mathcal{M} \; | \;
  R=r)}{\mathbb{P}(\mathcal{M})}
 = q_r \frac{M_r(C)}{M(C)}
\nonumber
\end{align}
hence relation (\ref{dismissr}) 
\end{proof}

If the popularity distribution has unbounded support, then $\lim_{r \uparrow +\infty} q_r = 0$ and formula (\ref{mrbis}) implies that the per-object probability $M_r(C)$ tends to 1 as $r \uparrow +\infty$ for fixed $C$; (\ref{dismissr}) consequently entails 
\begin{equation}
q_r(2) \sim \frac{q_r}{M(C)}, \; \; \; r \uparrow +\infty.
\label{asymptqtilde}
\end{equation}
For given $C$, asymptotic (\ref{asymptqtilde}) shows that the tail of distribution $(q_r(2))_{r \in \mathbb{N}}$ at infinity is proportional to that of distribution $(q_r)_{r \in \mathbb{N}}$. The distribution $(q_r(2))_{r \in \mathbb{N}}$ describes the output process of the single cache generated by consecutive missed requests; it will serve as an essential ingredient to the further extension of the single cache model to network cache configurations considered in Section \ref{sec:network_cache}. 

\subsection{First applications}

Coefficients $G(C)$, $C \geq 0$, and associated miss probability $M(C)$ can be explicitly derived for some specific popularity distributions. In the following, the total population $N$ of objects is always assumed to be infinite. 

\begin{corol}
Assume a geometric popularity distribution $q_r =
(1-\kappa)\kappa^{r-1}$, $r \geq 1$, with given $\kappa \in \;
]0,1[$. For all $C \geq 0$, the miss rate equals 
\begin{equation}
M(C) = \frac{1-\kappa}{1-\kappa^{C+1}}(C+1)\kappa^C .
\label{geometric}
\end{equation}

\label{GEO}
\end{corol}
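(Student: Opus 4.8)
The plan is to compute the ratio $G(C+1)/G(C)$ in closed form and then invoke Lemma~\ref{MAlter}, which states $M(C) = (C+1)\,G(C+1)/G(C)$. By Lemma~\ref{GEN}, the generating function of the $G(C)$ is $F(z) = \prod_{r\geq 1}(1+q_r z)$, an entire function of $z$. The key observation for the geometric law $q_r = (1-\kappa)\kappa^{r-1}$ is the scaling relation $q_{r+1} = \kappa\,q_r$, which yields the functional equation
\[
F(z) = \bigl(1+(1-\kappa)z\bigr)\,F(\kappa z), \qquad z \in \mathbb{C},
\]
since $\prod_{r\geq 2}(1+q_r z) = \prod_{r\geq 1}(1+q_r\,\kappa z) = F(\kappa z)$.

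First I would extract the coefficient of $z^C$ on both sides of this identity. Writing $F(z) = \sum_{C\geq 0} G(C)\,z^C$, the right-hand side equals $\sum_{C\geq 0}\kappa^C G(C)\,z^C + (1-\kappa)\sum_{C\geq 1}\kappa^{C-1}G(C-1)\,z^C$, so that, with $G(0)=1$, one gets for every $C\geq 1$
\[
(1-\kappa^C)\,G(C) = (1-\kappa)\,\kappa^{C-1}\,G(C-1),
\]
hence, for every $C\geq 0$,
\[
\frac{G(C+1)}{G(C)} = \frac{(1-\kappa)\,\kappa^{C}}{1-\kappa^{C+1}}.
\]
Substituting this ratio into Lemma~\ref{MAlter} gives $M(C) = (C+1)(1-\kappa)\kappa^C/(1-\kappa^{C+1})$, which is exactly (\ref{geometric}).

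I do not foresee a genuine obstacle. The one point worth a line of care is the manipulation of the infinite product defining $F$: this is licit because $\sum_r q_r < \infty$ makes the product absolutely convergent and analytic on all of $\mathbb{C}$, as already noted after Lemma~\ref{MAlter}, so the term-by-term coefficient identification is justified. (If one prefers an explicit formula, solving the recursion gives $G(C) = (1-\kappa)^C\,\kappa^{C(C-1)/2}/\prod_{k=1}^{C}(1-\kappa^k)$, a form of Euler's classical product–sum identity; but only the ratio $G(C+1)/G(C)$ is needed above, which avoids writing $G(C)$ itself.)
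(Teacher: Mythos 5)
Your proof is correct and follows exactly the paper's argument: derive the functional equation $F(z) = (1+(1-\kappa)z)F(\kappa z)$ from Lemma~\ref{GEN}, identify coefficients of $z^C$ to obtain the ratio $G(C+1)/G(C)$, and conclude via Lemma~\ref{MAlter}. You simply spell out the coefficient identification (and the justification of the infinite-product manipulation) that the paper leaves implicit.
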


\begin{proof} Using Lemma \ref{GEN}, $F$ is readily shown to
  verify the functional identity 
$F(z) = (1 + (1-\kappa)z) F(\kappa z)$
for all $z \in \mathbb{C}$. Expanding each side of that identity in power series of $z$ and identifying identical powers provides the value of the ratio $G(C+1)/G(C)$, hence result (\ref{geometric}) by (\ref{CMbis})
\end{proof}

Let us now assume that the popularity distribution follows a Zipf distribution defined by
\begin{equation}
q_r = \frac{A}{r^\alpha}, \; \; r \geq 1,
\label{Zipf}
\end{equation}
with exponent $\alpha > 1$ and  normalization constant $A = 1/\zeta(\alpha)$, where $\zeta$ is the Riemann's Zeta function. We now show how explicit rational expressions for miss rate $M(C)$ can be obtained for some integer values of $\alpha$.

\begin{corol}
Assume a Zipf popularity distribution with exponent $\alpha$. 
For all $C \geq 0$, the miss probability equals
\begin{align}
M(C) & = \displaystyle \frac{3}{2C+3} & \mathrm{if} & \; \; \alpha = 2,
\nonumber \\
& = \displaystyle \frac{45}{(4C+5)(4C+3)(2C+3)} & \mathrm{if} & \; \; \alpha = 4,
\nonumber \\
& = \displaystyle \frac{9!}{3!} \frac{ (C+1)}{\displaystyle \prod_{4 \leq j \leq 9}{(6C + j)}} & \mathrm{if} & \; \; \alpha = 6,
\nonumber
\end{align}
\label{Zipf24}
\end{corol}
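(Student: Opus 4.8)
The plan is to determine the normalizing coefficients $G(C)$ in closed form for each exponent and then invoke Lemma~\ref{MAlter}, which reduces everything to the ratio $G(C+1)/G(C)$. By Lemma~\ref{GEN}, $F(z)=\prod_{r\geq1}\bigl(1+Az\,r^{-\alpha}\bigr)$ with $A=1/\zeta(\alpha)$; the decisive point is that for \emph{even} $\alpha$ the constant $\zeta(\alpha)$ is a rational multiple of $\pi^{\alpha}$, so that $A$ cancels the factor $\pi^{\alpha}$ produced by Euler's product $\prod_{r\geq1}\bigl(1-w^{2}/r^{2}\bigr)=\sin(\pi w)/(\pi w)$ (equivalently $\prod_{r\geq1}\bigl(1+w^{2}/r^{2}\bigr)=\sinh(\pi w)/(\pi w)$), an identity valid for all complex $w$ since both sides are even entire functions. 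For $\alpha=2$, where $A=6/\pi^{2}$, this already suffices: $F(z)=\sinh(\sqrt{6z})/\sqrt{6z}=\sum_{C\geq0}(6z)^{C}/(2C+1)!$, so $G(C)=6^{C}/(2C+1)!$ and Lemma~\ref{MAlter} gives $M(C)=6(C+1)/[(2C+2)(2C+3)]=3/(2C+3)$.

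For $\alpha=4$ (with $A=90/\pi^{4}$) I would factor $1+Az/r^{4}=(1+i\sqrt{Az}\,r^{-2})(1-i\sqrt{Az}\,r^{-2})$, apply Euler's product to each factor, and put $v=(90z)^{1/4}$; a product-to-sum simplification then collapses $F$ into the closed form $F(z)=\bigl[\cosh(v\sqrt2)-\cos(v\sqrt2)\bigr]/(2v^{2})$. Expanding $\cosh-\cos$ retains only the monomials of degree $\equiv2\pmod4$ in $v$, which gives $G(C)=2\cdot360^{C}/(4C+2)!$; Lemma~\ref{MAlter} together with the identity $(4C+4)(4C+6)=8(C+1)(2C+3)$ then yields $M(C)=45/[(2C+3)(4C+3)(4C+5)]$.

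For $\alpha=6$ (with $A=945/\pi^{6}$) I would write $r^{6}+Az=(r^{2})^{3}+Az$ and factor the cubic in $r^{2}$ as the product of the real factor $r^{2}+(Az)^{1/3}$ and the complex-conjugate pair $r^{2}-e^{\pm i\pi/3}(Az)^{1/3}$; applying Euler's product to each and putting $u=(945z)^{1/6}$, a product-to-sum step leaves $F(z)=\sinh(u)\bigl[\cosh(u)-\cos(\sqrt3\,u)\bigr]/(2u^{3})$. Extracting the coefficients is the crux: a sixth-root-of-unity filter applied to $e^{x}$ with $x=2u$ (isolating the part of the exponential series carried by exponents $\equiv3\pmod6$) shows $\sinh(u)\bigl[\cosh(u)-\cos(\sqrt3\,u)\bigr]=\frac{3}{2}\sum_{C\geq0}(2u)^{6C+3}/(6C+3)!$, hence $F(z)=6\sum_{C\geq0}(2u)^{6C}/(6C+3)!$; using $u^{6}=945z$ this is $G(C)=3!\,(64\cdot945)^{C}/(6C+3)!=3!\,(60480)^{C}/(6C+3)!$, and Lemma~\ref{MAlter} gives $M(C)=60480\,(C+1)/\prod_{j=4}^{9}(6C+j)=(9!/3!)\,(C+1)/\prod_{4\leq j\leq9}(6C+j)$.

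The main obstacle is the $\alpha=6$ computation: choosing the grouping of the six complex roots of $r^{6}+Az$ into one real factor and one conjugate pair so that the product is manifestly real and expressible through $\sinh$ and $\cos$, and then correctly identifying the power series of $\sinh(u)\bigl[\cosh(u)-\cos(\sqrt3\,u)\bigr]/u^{3}$ as a series in $u^{6}$. The vanishing of the would-be intermediate powers $u^{1},\dots,u^{5}$ is forced, since $F$ is by construction an entire function of $z=u^{6}/945$, but it is cleanest to make this explicit through the roots-of-unity filter, which disposes of it in one stroke. Once the closed form of $F$ is in hand the remaining algebra is routine, and the same scheme applies verbatim to every even exponent, the cases $\alpha=2,4,6$ being those whose closed forms are most compact.
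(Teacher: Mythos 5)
Your proposal is correct and follows the same route as the paper: compute $G(C)$ in closed form via the Euler/hyperbolic product representation of $F$ and then apply Lemma~\ref{MAlter}. The paper only details the case $\alpha=2$ (obtaining $G(C)=(\pi^2A)^C/(2C+1)!$, identical to your $6^C/(2C+1)!$) and asserts that $\alpha=4,6$ "follow a similar derivation pattern"; your factorizations into conjugate pairs, the resulting closed forms of $F$, and the coefficient extractions for $\alpha=4$ and $\alpha=6$ check out and correctly supply the details the paper omits.
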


\begin{proof} When $\alpha = 2$, the normalization constant equals $A = 1/\zeta(2) = 6/\pi^2$. From the infinite product formula \cite{Abram}
$$
F(z) = \prod_{j \geq 1} \left ( 1 + \frac{u^2}{\pi^2 j^2} \right ) = \frac{\sinh u}{u}
$$
and expanding the left hand side into powers of $u^2 = Az \pi^2$ gives the expansion $F(z) = \Sigma_{C \geq 1} G(C) z^C$ where
$$
G(C) = (\pi^2 A)^C/(2C + 1)!, \; \; C \geq 0.
$$
Computing then ratio (\ref{CMbis}) with the above expression of $G(C)$ then provides $M(C) = 3/(2C+3)$, as claimed. The cases when $\alpha = 4$ or $\alpha = 6$ follow a similar derivation pattern 
\end{proof}
The formulas of Corollary \ref{Zipf24} do not seem, however, to generalize for integer values $\alpha = 2p$ with $p \geq 4$; upper bounds can be envisaged and are the object of further study.

As also suggested by Corollary \ref{Zipf24}, the cache size corresponding to a target miss probability should be a decreasing function of $\alpha$. 
This property is generalized in \textbf{Section \ref{sec:large_cache}} where an asymptotic evaluation of $M(C)$ is provided for large cache size $C$ and any Zipf popularity distribution with real exponent $\alpha > 1$.

\subsection{Large cache approximation} \label{sec:large_cache}

The specific popularity distributions considered in Corollaries \ref{GEO} and \ref{Zipf24} show that $M(C)$ is of order $C q_C$ for $C$. In the present section, we derive general  asymptotics for probabilities $M(C)$ and $M_r(C)$ for large cache size and apply them to the Zipf popularity distribution. 

We first start by formulating a general large deviations result (Theorem \ref{LD}) for evaluating coefficients $G(C)$ for large $C$. To apply the latter theorem to the Zipf distribution (\ref{Zipf}), we then state two preliminary results (Lemmas \ref{ASYMPTF} and \ref{SADDLEPOINTC}) on the behavior of the corresponding generating function $F$ at infinity. This finally enables us to claim our central result (Proposition \ref{ASYMPT}) for the behavior of $M(C)$ for large $C$.

\begin{theorem} (See Proof in \textbf{Appendix \ref{LargeDeviations}})
\\
(i) Given the generating function $F$ defined in (\ref{genf}), equation
\begin{equation}
z  F'(z) = C F(z)
\label{saddlepoint0}
\end{equation}
has a unique real positive solution $z = \theta_C$ for any given $C \geq 0$;
\\
\indent
(ii) Assume that there exists some constant $\sigma > 0$ such that the limit
\begin{equation}
\lim_{C \uparrow +\infty} e^{s\sqrt{C}} \frac{F(\theta_C e^{-s/\sqrt{C}})}{F(\theta_C)} = e^{\sigma^2 s^2/2}
\label{Gaussapprox1}
\end{equation}
holds for any given $s \in \mathbb{C}$ with $\Re(s) = 0$ and that, given any $\delta > 0$, there exists $\eta \in ]0, 1[$ and an integer $C_\delta$ such that
\begin{equation}
\sup_{\delta \leq \mid y \mid \; \leq \pi} \left |  \frac{F(\theta_C e^{iy})}{F(\theta_C)} \right | ^{1/C} \leq \eta
\label{Gaussapprox2}
\end{equation}
for $C \geq C_\delta$. We then have
\begin{equation}
G(C) \sim \frac{\exp(H_C)}{\sigma \sqrt{2\pi C}}
\label{Saddlepointfin}
\end{equation}
as $C$ tends to infinity, with $H_C = \log F(\theta_C) - C \log \theta_C$.
\label{LD}
\end{theorem}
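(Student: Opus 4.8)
The plan is the classical saddle-point (Hayman-admissibility) analysis of the Cauchy integral for the coefficients $G(C)=[z^{C}]F(z)$.

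\textbf{Part (i).} For real $z>0$ put $\phi(z):=zF'(z)/F(z)$; by Lemma~\ref{GEN} and logarithmic differentiation, $\phi(z)=\sum_{1\le r\le N}\frac{q_rz}{1+q_rz}$. Each summand is continuous and strictly increasing from $0$ to $1$ on $(0,\infty)$, so $\phi$ is continuous with $\phi(0)=0$, $\phi(z)\uparrow N$ as $z\uparrow\infty$ (the limit being $+\infty$ when $N=\infty$), and $\phi'(z)=\sum_r q_r/(1+q_rz)^2>0$; hence $\phi$ is a strictly increasing bijection of $(0,\infty)$ onto $(0,N)$. Thus for every $C$ with $0\le C<N$ the equation $\phi(z)=C$, equivalently $zF'(z)=CF(z)$, has exactly one positive root $z=\theta_C$, which is (i).

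\textbf{Part (ii).} Because $F$ is entire (cf.\ the remark following Lemma~\ref{MAlter}), Cauchy's formula on the circle $z=\theta_Ce^{iy}$, $|y|\le\pi$, gives
$$G(C)=\frac{1}{2\pi}\int_{-\pi}^{\pi}\frac{F(\theta_Ce^{iy})}{\theta_C^{C}e^{iCy}}\,dy=\frac{e^{H_C}}{2\pi}\int_{-\pi}^{\pi}\frac{F(\theta_Ce^{iy})}{F(\theta_C)}\,e^{-iCy}\,dy ,$$
using $e^{H_C}=F(\theta_C)\theta_C^{-C}$. The radius $\theta_C$ is chosen to make $y=0$ a critical point of the phase: $\frac{d}{dt}\log\!\bigl(F(\theta_Ce^{t})(\theta_Ce^{t})^{-C}\bigr)\big|_{t=0}=\theta_CF'(\theta_C)/F(\theta_C)-C=0$ by~(i). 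I would split the integral into a central part $|y|\le\delta$ and a tail $\delta\le|y|\le\pi$, for a fixed $\delta\in(0,\pi/2)$. The tail is handled by~(\ref{Gaussapprox2}): $|F(\theta_Ce^{iy})/F(\theta_C)|\le\eta^{C}$ there for $C\ge C_\delta$, so that contribution is $O(\eta^{C})=o(1/\sqrt C)$, negligible against the announced main term. For the central part, substitute $y=u/\sqrt C$, so $e^{-iCy}=e^{-iu\sqrt C}$ and $dy=du/\sqrt C$, giving
$$\frac{e^{H_C}}{2\pi}\int_{|y|\le\delta}\frac{F(\theta_Ce^{iy})}{F(\theta_C)}e^{-iCy}\,dy=\frac{e^{H_C}}{2\pi\sqrt C}\int_{|u|\le\delta\sqrt C}\frac{F(\theta_Ce^{iu/\sqrt C})}{F(\theta_C)}\,e^{-iu\sqrt C}\,du .$$
By~(\ref{Gaussapprox1}) with the purely imaginary parameter $s=-iu$, the integrand tends pointwise to $e^{-\sigma^{2}u^{2}/2}$; granting the interchange of limit and integral, the integral converges to $\int_{\mathbb{R}}e^{-\sigma^{2}u^{2}/2}\,du=\sqrt{2\pi}/\sigma$, and multiplication by $e^{H_C}/(2\pi\sqrt C)$ yields $e^{H_C}/(\sigma\sqrt{2\pi C})$, i.e.~(\ref{Saddlepointfin}).

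The crux is justifying this interchange over the expanding window $|u|\le\delta\sqrt C$, that is, promoting the fixed-$s$ statement~(\ref{Gaussapprox1}) to a dominated-convergence argument. I would extract the needed bound from the product form of $F$: with $z=\theta_Ce^{iy}$ one has $|1+q_r\theta_Ce^{iy}|^{2}=(1+q_r\theta_C)^{2}-2q_r\theta_C(1-\cos y)$, so that, using $\log(1-x)\le -x$,
$$\log\frac{|F(\theta_Ce^{iy})|}{F(\theta_C)}=\frac12\sum_r\log\!\Bigl(1-\frac{2q_r\theta_C(1-\cos y)}{(1+q_r\theta_C)^{2}}\Bigr)\le-(1-\cos y)\,V_C,\qquad V_C:=\sum_r\frac{q_r\theta_C}{(1+q_r\theta_C)^{2}} .$$
In particular $|F(\theta_Ce^{iy})|\le F(\theta_C)$ for all $y$, which already disposes of any fixed window $|u|\le K$ by bounded convergence. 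On $K\le|u|\le\delta\sqrt C$ the inequality $1-\cos y\ge 2y^{2}/\pi^{2}$ combined with the non-degeneracy $\liminf_{C}V_C/C>0$ — which is forced by the hypothesis $\sigma>0$ in~(\ref{Gaussapprox1}), since a matching lower bound $\log(|F(\theta_Ce^{iy})|/F(\theta_C))\ge-2(1-\cos y)V_C$ holds for $|y|\le\pi/2$ — gives $|F(\theta_Ce^{iy})/F(\theta_C)|\le e^{-cu^{2}}$ for large $C$ and some $c>0$, making the contribution of $|u|\ge K$ uniformly small as $K\to\infty$. Letting first $C\to\infty$ and then $K\to\infty$ closes the estimate. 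The remaining effort is the bookkeeping to reconcile the choice of $\delta$ with $C_\delta$ and to verify the linear growth of $V_C$ from the stated hypotheses; for the Zipf distribution this is precisely the content supplied by Lemmas~\ref{ASYMPTF} and~\ref{SADDLEPOINTC}.
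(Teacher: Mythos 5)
Your proof is correct, but it takes a genuinely different route from the paper's. The paper argues probabilistically: it introduces the tilted random variable $X_C$ with $\mathbb{P}(X_C=x)=G(x)\theta_C^{x}/F(\theta_C)$, whose mean equals $C$ precisely by the choice of $\theta_C$ and whose normalized Laplace transform is exactly the quantity appearing in (\ref{Gaussapprox1}); L\'evy's continuity theorem then gives a central limit theorem for $(X_C-C)/\sqrt{C}$, and the local limit theorem of Chaganty and Sethuraman (invoked with hypotheses (\ref{Gaussapprox1}) and (\ref{Gaussapprox2})) yields $\mathbb{P}(X_C=C)\sim 1/(\sigma\sqrt{2\pi C})$, which unwinds to (\ref{Saddlepointfin}). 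You instead perform the saddle-point analysis on the Cauchy coefficient integral directly, which amounts to reproving that local limit theorem in this specific setting. The two arguments are equivalent in substance --- your tail estimate is exactly where (\ref{Gaussapprox2}) enters in both --- but your treatment of the intermediate range $K\le|u|\le\delta\sqrt{C}$, via the explicit product-form bound $\log\bigl(|F(\theta_Ce^{iy})|/F(\theta_C)\bigr)\le-(1-\cos y)V_C$ together with the non-degeneracy $\liminf_C V_C/C>0$ (correctly deduced from $\sigma>0$ through the matching lower bound), is a self-contained justification of the interchange of limit and integral that the paper delegates entirely to the cited theorem. The paper's route is shorter; yours is more transparent about where each hypothesis is used and exploits the special structure of $F$ as an infinite product, which is available here but would not be in a general local limit theorem. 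Part (i) is the same monotonicity argument in both proofs; your refinement that for finite $N$ the range of $zF'(z)/F(z)$ is $(0,N)$, so the equation is solvable only for $C<N$, is accurate but immaterial for the asymptotics, which concern $N=\infty$.
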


Following Theorem \ref{LD}, the asymptotic behavior of $M(C)$ can then be derived from (\ref{Saddlepointfin}) together with identity (\ref{CMbis}). This approach is now applied to the Zipf popularity distribution (\ref{Zipf}); in this aim, the behavior of the corresponding generating function $F$ is first specified as follows.

\begin{lemma} (See Proof in \textbf{Appendix \ref{ProofAsymptF}})
\\
For $\alpha > 1$ and large $z \in \mathbb{C} \setminus \mathbb{R}^-$, $\log F(z)$ expands  as 
\begin{equation}
\log F(z) = \alpha (\rho_\alpha Az)^{1/\alpha} - \frac{1}{2} \log(Az) + S_\alpha + o(1)
\label{asympt}
\end{equation}
with $A = 1/\zeta(\alpha)$ and $S_\alpha$ depending on $\alpha$
only and 
\begin{equation}
\rho_\alpha = \left ( \frac{\pi/\alpha}{\sin(\pi/\alpha)} \right ) ^\alpha, 
\label{prefP}
\end{equation}
 
\label{ASYMPTF}
\end{lemma}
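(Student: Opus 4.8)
The plan is to obtain the asymptotics of $\log F(z)=\sum_{r\geq 1}\log(1+q_rz)$ with $q_r=A/r^\alpha$ by a Mellin-transform analysis, which is the natural tool for harmonic sums of this type. First I would write, for $z\in\mathbb{C}\setminus\mathbb{R}^-$,
\[
\log F(z)=\sum_{r\geq 1}\log\!\left(1+\frac{Az}{r^\alpha}\right),
\]
and recognize this as a harmonic sum with base function $g(x)=\log(1+x)$ evaluated at the amplitudes/frequencies $(1, Az/r^\alpha)$. The Mellin transform of $g$ is $g^*(s)=\int_0^\infty \log(1+x)\,x^{s-1}\,dx=\pi/(s\sin\pi s)$, valid in the strip $-1<\Re(s)<0$. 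Hence the Mellin transform of $\log F$ (viewed as a function of a large real parameter, then continued) is
\[
\left(\sum_{r\geq 1} r^{\alpha s}\right)(Az)^{-s}\,\frac{\pi}{s\sin\pi s}
=\zeta(-\alpha s)\,(Az)^{-s}\,\frac{\pi}{s\sin\pi s},
\]
for $\Re(s)$ in a strip where both factors converge, namely $-1<\Re(s)<-1/\alpha$.

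Next I would push the inversion contour to the right and collect residues, each of which contributes a term in the large-$z$ expansion. The pole of $\zeta(-\alpha s)$ at $s=-1/\alpha$ gives the leading term: since $\zeta(w)\sim 1/(w-1)$ near $w=1$, one gets a simple pole at $s=-1/\alpha$ whose residue produces a term proportional to $(Az)^{1/\alpha}$, and evaluating $\pi/(s\sin\pi s)$ at $s=-1/\alpha$ yields exactly the constant so that this term equals $\alpha(\rho_\alpha Az)^{1/\alpha}$ with $\rho_\alpha=\big(\tfrac{\pi/\alpha}{\sin(\pi/\alpha)}\big)^\alpha$ as in (\ref{prefP}); the algebra here is a short check that $\alpha\,\rho_\alpha^{1/\alpha}=\pi/(\sin(\pi/\alpha))\cdot(\text{stuff})$ matches the residue. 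The double pole at $s=0$ — coming from the simple zero of $\sin\pi s$ in the denominator together with the factor $1/s$, and crossed by the factor $(Az)^{-s}$ — produces a logarithmic term $-\tfrac12\log(Az)$ plus a constant; here one uses $\zeta(0)=-1/2$, which is precisely where the coefficient $-\tfrac12$ originates, and the Laurent expansion of $(Az)^{-s}$ around $s=0$ supplies the $\log(Az)$. The constant $S_\alpha$ is then the sum of the $s$-independent residue contributions at $s=0$ (involving $\zeta'(0)$, $\gamma$, and so on) — all depending on $\alpha$ only. Remaining poles of $\zeta(-\alpha s)$ at $s=(2k+1)/\alpha$? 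No: $\zeta$ has only the one pole; its trivial zeros at $-2,-4,\dots$ mean $\zeta(-\alpha s)=0$ at $s=2k/\alpha$, which kill what would otherwise be poles of $\pi/(s\sin\pi s)$ at positive integers, so the next genuine contributions are exponentially small or $O(z^{-c})$ for some $c>0$, hence absorbed in the $o(1)$.

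The main obstacle, as usual in Mellin asymptotics, is not the residue bookkeeping but the justification that the shifted-contour integral is genuinely $o(1)$ as $|z|\to\infty$ with $z\in\mathbb{C}\setminus\mathbb{R}^-$. This requires controlling $|\zeta(-\alpha s)\,\pi/(s\sin\pi s)|$ along vertical lines $\Re(s)=\sigma_0>0$ using the polynomial growth of $\zeta$ in vertical strips and the exponential decay of $1/\sin\pi s$ as $|\Im(s)|\to\infty$, and pairing this with $|(Az)^{-s}|=|Az|^{-\sigma_0}e^{\arg(Az)\,\Im(s)}$; the constraint $z\notin\mathbb{R}^-$ ensures $|\arg(Az)|<\pi$, so the growth $e^{\arg(Az)\Im s}$ is dominated by the $e^{-\pi|\Im s|}$ decay from $1/\sin\pi s$, making the integral converge and decay like $|z|^{-\sigma_0}$. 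One must also handle the case of complex (non-real) $z$ carefully in applying the Mellin inversion — this is standard but needs the sector condition — and verify that moving the contour past the double pole at $s=0$ is legitimate (no accumulation of poles, uniform bounds on the horizontal connecting segments). Once that analytic estimate is in place, the expansion (\ref{asympt}) follows by reading off the residues collected between the original strip and $\Re(s)=\sigma_0$ for a small $\sigma_0\in(0,\min(2/\alpha,1))$.
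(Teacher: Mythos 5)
Your proposal is correct, but it takes a genuinely different route from the paper. The paper's proof (Appendix~\ref{ProofAsymptF}) is elementary: it applies the Euler--Maclaurin summation formula to $\sum_r \log(1+q_rz)$, evaluates the resulting integral $\int_1^\infty \log(1+Az/x^\alpha)\,\mathrm{d}x$ by the substitution $x=t(Az)^{1/\alpha}$ to extract the leading term $L(Az)^{1/\alpha}-\log(Az)-\alpha+o(1)$ with $L=\pi/\sin(\pi/\alpha)=\alpha\rho_\alpha^{1/\alpha}$, and controls the correction terms (the $\tfrac12\log(1+Az)$ endpoint term supplies half of the $\log$, and the Bernoulli remainder converges to a constant by dominated convergence). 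Your Mellin-transform treatment of the harmonic sum is an equally valid and standard alternative, and your residue bookkeeping checks out: the pole of $\zeta(-\alpha s)$ at $s=-1/\alpha$ yields $(Az)^{1/\alpha}\pi/\sin(\pi/\alpha)=\alpha(\rho_\alpha Az)^{1/\alpha}$, and the double pole at $s=0$ yields $-\zeta(0)\log(Az)=-\tfrac12\log(Az)$ plus the constant $\alpha\zeta'(0)=-\tfrac{\alpha}{2}\log(2\pi)$ --- which in fact \emph{derives} the value of $S_\alpha$ that the paper only asserts in passing. What your route buys is this explicit constant and a clean full asymptotic scale ($O(|z|^{-\sigma_0})$ error); what the paper's route buys is avoiding the analytic-continuation and vertical-decay estimates needed to justify Mellin inversion off the positive real axis, which you rightly flag as the main technical burden (and which matters here, since the lemma is needed for complex $z=\theta_Ce^{iy}$ in Appendix~\ref{ProofAsympt}). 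One minor slip: the trivial zeros of $\zeta(-\alpha s)$ sit at $s=2k/\alpha$ and do \emph{not} in general coincide with the poles of $\pi/(s\sin\pi s)$ at positive integers unless $\alpha n=2k$; this is harmless, though, because you stop the contour at $\Re(s)=\sigma_0<1$ before meeting them, and any such residue would in any case be $O(z^{-1})=o(1)$.
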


\begin{lemma} (See Proof in \textbf{Appendix \ref{ProofSaddlepointC}})
\\
For $\alpha > 1$ and large $C$, the unique real positive solution $\theta_C$ of equation (\ref{saddlepoint0}) verifies
\begin{equation}
\theta_C = \frac{C^\alpha}{A\rho_\alpha} + C^{\alpha - 1} r_C
\label{zC}
\end{equation}
with $r_C = A_1 + O(C^{-1})$ with $A_1 = \alpha/2 \rho_\alpha A$ if $\alpha \neq 2$ and $r_C = O(\log C)$ if $\alpha = 2$.
\label{SADDLEPOINTC}
\end{lemma}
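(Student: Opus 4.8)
The plan is to read off the asymptotics of $\theta_C$ from the expansion of $\log F$ given in Lemma~\ref{ASYMPTF}, after rewriting the defining equation~(\ref{saddlepoint0}) in logarithmic-derivative form. Introduce
\[
\Phi(z)\;=\;z\,\frac{F'(z)}{F(z)}\;=\;z\,(\log F)'(z)\;=\;\sum_{r\ge 1}\frac{q_rz}{1+q_rz},\qquad z>0 .
\]
Each summand is strictly increasing in $z>0$, tends to $0$ as $z\to0$ and to $1$ as $z\to+\infty$, so $\Phi$ is a continuous increasing bijection of $(0,+\infty)$ onto itself (consistent with Theorem~\ref{LD}(i)); equation~(\ref{saddlepoint0}) reads $\Phi(\theta_C)=C$, and $\theta_C=\Phi^{-1}(C)\to+\infty$ as $C\to\infty$. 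It therefore suffices to expand $\Phi$ at infinity and to invert.

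First I would differentiate the expansion~(\ref{asympt}). Write $\log F(z)=\alpha(\rho_\alpha Az)^{1/\alpha}-\frac12\log(Az)+S_\alpha+E(z)$ with $E(z)=o(1)$. Since $F$ is holomorphic and zero-free on $\mathbb{C}\setminus\mathbb{R}^-$ and~(\ref{asympt}) holds uniformly for large $z$ in that slit plane, applying Cauchy's estimate on the disc of radius $z/2$ centred at a point $z>0$ (a disc contained in the slit plane, all of whose points have modulus $\to+\infty$) gives $E'(z)=o(1/z)$, hence $zE'(z)=o(1)$; the explicit terms differentiate term by term. Using the identities $z\,\frac{d}{dz}\,z^{\beta}=\beta z^{\beta}$ with $\beta=1/\alpha$ and $z\,\frac{d}{dz}\log(Az)=1$, this gives
\[
\Phi(z)\;=\;(\rho_\alpha A z)^{1/\alpha}\;-\;\frac12\;+\;o(1),\qquad z\to+\infty .
\]

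Since $\theta_C\to+\infty$, inserting $z=\theta_C$ and $\Phi(\theta_C)=C$ yields $(\rho_\alpha A)^{1/\alpha}\theta_C^{1/\alpha}=C+\frac12+o(1)$, whence, raising to the power $\alpha$ and using the binomial expansion of $(1+x)^\alpha$,
\[
\theta_C\;=\;\frac{1}{A\rho_\alpha}\Bigl(C+\frac12+o(1)\Bigr)^{\!\alpha}
\;=\;\frac{C^\alpha}{A\rho_\alpha}\Bigl(1+\frac{\alpha}{2C}+o\bigl(C^{-1}\bigr)\Bigr)
\;=\;\frac{C^\alpha}{A\rho_\alpha}+\frac{\alpha}{2\rho_\alpha A}\,C^{\alpha-1}+o\bigl(C^{\alpha-1}\bigr),
\]
which already gives $r_C=A_1+o(1)$ with $A_1=\alpha/(2\rho_\alpha A)$. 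To sharpen the error to $O(C^{-1})$ one needs the remainder in the differentiated expansion to be a genuine negative power of $z$ and not merely $o(1)$; this is supplied by the Mellin (harmonic-sum) analysis behind Lemma~\ref{ASYMPTF}, where the first correction to~(\ref{asympt}) beyond the constant $S_\alpha$ is seen to be of order $z^{-1}$ with coefficient proportional to $\zeta(-\alpha)$, so that $\Phi(z)=(\rho_\alpha Az)^{1/\alpha}-\frac12+O(z^{-1})$. As $\theta_C$ is of order $C^\alpha$, this feeds an $O(C^{-\alpha})=O(C^{-1})$ term through the inversion and yields $r_C=A_1+O(C^{-1})$.

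The step I expect to be the main obstacle is exactly this last one: propagating a quantitative error bound (a power of $z$) through the differentiation of~(\ref{asympt}) and then through the non-linear inversion of $\Phi$. The borderline value $\alpha=2$ must be treated separately, since the coefficient of the $z^{-1}$ correction vanishes there ($\zeta(-2)=0$), the power bookkeeping degenerates, and one retains only the weaker bound $r_C=O(\log C)$; this suffices for the rest of the paper, which uses merely that $\theta_C\sim C^\alpha/(A\rho_\alpha)$ together with the order of magnitude of the correction term.
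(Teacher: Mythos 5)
Your route to the leading behaviour is genuinely different from the paper's and is sound as far as it goes. The paper does not differentiate the expansion of $\log F$ from Lemma \ref{ASYMPTF}; instead, in Appendix \ref{ProofSaddlepointC} it applies the Euler--Maclaurin formula \emph{directly} to the sum $g(z)=\sum_{r\ge1}q_rz/(1+q_rz)$, which yields $g(z)=\rho_\alpha^{1/\alpha}(Az)^{1/\alpha}-\tfrac12+W_z$ together with \emph{quantitative} bounds on the remainder: $|W_z|=O(z^{-2/\alpha})$ for $\alpha>2$, $O(z^{-1}\log z)$ for $\alpha=2$, and $O(z^{-1})$ for $1<\alpha<2$. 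Your alternative --- differentiating (\ref{asympt}) and controlling the derivative of the $o(1)$ error by a Cauchy estimate on the disc $D(z,z/2)$ in the slit plane --- is legitimate (provided you note that the $o(1)$ in (\ref{asympt}) is uniform on the relevant sector, which the estimates of Appendix \ref{ProofAsymptF} do give), and it buys you the same leading identity $\Phi(z)=(\rho_\alpha Az)^{1/\alpha}-\tfrac12+o(1)$ with less computation. Inverting this correctly produces $r_C=A_1+o(1)$ with $A_1=\alpha/2\rho_\alpha A$.

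The gap is exactly where you locate it, but it is not closed. The lemma asserts $r_C=A_1+O(C^{-1})$ for $\alpha\neq2$, and a Cauchy estimate applied to an $o(1)$ error can never return more than $o(1/z)$; to get a power-of-$z$ remainder you appeal to ``the Mellin (harmonic-sum) analysis behind Lemma \ref{ASYMPTF}'', but that is not the analysis behind Lemma \ref{ASYMPTF} in this paper (which is Euler--Maclaurin), and you do not carry it out. So the quantitative part of the statement rests on an unproved assertion. A Mellin treatment of $\sum_r f(rt)$ with $f(u)=1/(1+u^\alpha)$ would indeed work and would give the $O(z^{-1})$ correction with coefficient proportional to $\zeta(-\alpha)$, but note that your explanation of the $\alpha=2$ anomaly is then inconsistent: $\zeta(-2)=0$ makes that correction \emph{vanish}, which would improve the bound rather than degrade it to $O(\log C)$. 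In the paper the logarithm at $\alpha=2$ has a different origin, namely the Euler--Maclaurin remainder bound $|W_z|=O(z^{-1}\log z)$ in (\ref{Wz}). To repair your proof within its own framework, the cleanest fix is to do what the paper does for this step: apply Euler--Maclaurin to the sum defining $\Phi$ itself (whose summand $g_z(x)=Az(x^\alpha+Az)^{-1}$ has explicitly controllable derivatives), rather than trying to transfer error bounds from the expansion of $\log F$.
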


We can now state our central result.

\begin{prop}
For a Zipf popularity distribution with exponent $\alpha > 1$, the miss probability $M(C)$ is asymptotic to
\begin{equation}
M(C) \sim \rho_\alpha C q_C = \frac{A \rho_\alpha}{C^{\alpha -1}}
\label{asymptfin}
\end{equation}
for large $C$, with prefactor $\rho_\alpha$ given in (\ref{prefP}). 
\\
Prefactor $\rho_\alpha$ verifies $\lim_{\alpha \uparrow +\infty} \rho_\alpha = 1$ and $\rho_\alpha \sim 1/(\alpha - 1)$ as $\alpha \downarrow 1$.
\label{ASYMPT}
\end{prop}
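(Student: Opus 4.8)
The plan is to combine the large‑deviations estimate of Theorem~\ref{LD} with the explicit expansion of $\log F$ from Lemma~\ref{ASYMPTF} and the location of the saddle point $\theta_C$ from Lemma~\ref{SADDLEPOINTC}, feeding everything into the identity $M(C) = (C+1)\,G(C+1)/G(C)$ of Lemma~\ref{MAlter}. First I would record from Theorem~\ref{LD} that, provided the Gaussian hypotheses (\ref{Gaussapprox1})--(\ref{Gaussapprox2}) are verified for the Zipf generating function, $G(C) \sim \exp(H_C)/(\sigma_C\sqrt{2\pi C})$ with $H_C = \log F(\theta_C) - C\log\theta_C$; hence
\[
\frac{G(C+1)}{G(C)} \sim \sqrt{\frac{C}{C+1}}\,\frac{\sigma_C}{\sigma_{C+1}}\,\exp\bigl(H_{C+1}-H_C\bigr) \sim \exp\bigl(H_{C+1}-H_C\bigr),
\]
the prefactor ratios tending to $1$. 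So the whole computation reduces to evaluating the discrete increment $H_{C+1}-H_C$, i.e.\ essentially the derivative $dH_C/dC$, to precision $o(1)$.

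Next I would compute $H_C$ asymptotically. Substituting the saddle‑point value (\ref{zC}), $\theta_C = C^\alpha/(A\rho_\alpha) + C^{\alpha-1} r_C$, into the expansion (\ref{asympt}) of $\log F$, the leading term $\alpha(\rho_\alpha A\theta_C)^{1/\alpha}$ becomes $\alpha C\,(1 + A\rho_\alpha r_C/C + \dots)^{1/\alpha} = \alpha C + \text{(lower order)}$, while $-\tfrac12\log(A\theta_C) = -\tfrac12\alpha\log C + \text{const} + o(1)$ and $-C\log\theta_C = -\alpha C\log C + C\log(A\rho_\alpha) + \text{lower order}$. Collecting terms, $H_C = -\alpha C\log C + \alpha C + C\log(A\rho_\alpha) - \tfrac12(\alpha-1)\log C + \text{const} + o(1)$. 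Differencing in $C$: $H_{C+1}-H_C = -\alpha\log C - \alpha + \alpha + \log(A\rho_\alpha) + o(1) = \log(A\rho_\alpha) - \alpha\log C + o(1)$, so that
\[
\frac{G(C+1)}{G(C)} \sim A\rho_\alpha\, C^{-\alpha},
\qquad
M(C) = (C+1)\frac{G(C+1)}{G(C)} \sim A\rho_\alpha\, C^{1-\alpha},
\]
which is exactly (\ref{asymptfin}) since $Cq_C = AC^{1-\alpha}$. Care is needed to keep the error terms in $r_C$ (the $O(C^{-1})$, or $O(\log C)$ when $\alpha=2$) under control: they contribute to $H_C$ at order $O(C^{-1}\cdot C) = O(1)$, but to the \emph{difference} $H_{C+1}-H_C$ only at order $o(1)$, so the final asymptotic is unaffected; this is the one bookkeeping point I would check explicitly for the $\alpha=2$ case.

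The remaining, and I expect main, obstacle is verifying that the Zipf generating function actually satisfies the hypotheses (\ref{Gaussapprox1}) and (\ref{Gaussapprox2}) of Theorem~\ref{LD} — i.e.\ the local central‑limit (Gaussian) behavior near the positive real axis and the uniform decay of $|F(\theta_C e^{iy})/F(\theta_C)|$ away from it. The first follows by a second‑order Taylor expansion of $\log F(\theta_C e^{i\tau/\sqrt C})$ using (\ref{asympt}), identifying $\sigma_C^2$ with $\theta_C^2 F''/F + \theta_C F'/F$ evaluated at the saddle and checking it grows like $C$ (so that the rescaled limit in (\ref{Gaussapprox1}) is a fixed Gaussian — note the statement writes $\sigma$ for what is really a $C$‑dependent scale, absorbed in the $\sqrt C$ normalization); the second requires bounding $\sum_r \log|1+q_r\theta_C e^{iy}|$ below the real‑axis value uniformly in $\delta \le |y| \le \pi$, which is the delicate tail estimate. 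I would handle these two points via the analytic machinery already promised in Appendices \ref{ProofAsymptF} and \ref{ProofSaddlepointC}, since the expansion (\ref{asympt}) is in fact valid in the whole cut plane $\mathbb{C}\setminus\mathbb{R}^-$ and gives both the Gaussian tilt and, with a little work on the argument, the off‑axis decay. Finally, the two limiting statements on the prefactor, $\rho_\alpha \to 1$ as $\alpha\to\infty$ and $\rho_\alpha \sim 1/(\alpha-1)$ as $\alpha\downarrow 1$, are elementary consequences of (\ref{prefP}): as $\alpha\to\infty$, $(\pi/\alpha)/\sin(\pi/\alpha)\to 1$ and raising to the power $\alpha$ still gives $1$ since $\log$ of the base is $O(1/\alpha^2)$; as $\alpha\downarrow 1$, $\sin(\pi/\alpha)=\sin(\pi - \pi/\alpha) = \sin(\pi(\alpha-1)/\alpha) \sim \pi(\alpha-1)$, whence $\rho_\alpha \sim (1/(\alpha-1))^\alpha \sim 1/(\alpha-1)$.
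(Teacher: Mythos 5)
Your proposal is correct and follows essentially the same route as the paper: it feeds the saddle-point estimate of Theorem~\ref{LD} (with its hypotheses checked for the Zipf case), together with the expansions of Lemmas~\ref{ASYMPTF} and~\ref{SADDLEPOINTC}, into the identity $M(C)=(C+1)G(C+1)/G(C)$ and evaluates the increment $H_{C+1}-H_C=-\alpha\log C+\log(A\rho_\alpha)+o(1)$, exactly as in the paper's argument (which defers the verification of (\ref{Gaussapprox1})--(\ref{Gaussapprox2}) to Appendix~\ref{ProofAsympt}). Your additional bookkeeping on the $r_C$ terms and the elementary verification of the two limits of $\rho_\alpha$ are sound and merely make explicit steps the paper leaves implicit.
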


\begin{proof} As verified in \textbf{\textbf{Appendix \ref{ProofAsympt}}}, the conditions of Theorem \ref{LD} are satisfied for a Zipf distribution. Using asymptotics  (\ref{asympt}) and (\ref{zC}) of Lemmas \ref{ASYMPTF} and \ref{SADDLEPOINTC} to explicit the argument $H_C$ in (\ref{Saddlepointfin}) for large $C$, we then have
\begin{align}
H_C & = \log F(\theta_C) - C \log \theta_C 
\nonumber \\
& = \alpha(\rho_\alpha A\theta_C)^{1/\alpha} - \frac{1}{2}\log(A \theta_C) + S_\alpha + o(1) -
\nonumber \\
& \; \; \; \; C \log \left [ \frac{C^\alpha}{A\rho_\alpha} + C^{\alpha - 1}r_C \right ]
\nonumber
\end{align}
so that $H_{C+1} - H_C = -\alpha \log C + k + o (1)$ with
constant $k = \log \rho_\alpha + \log A$. Coming back to
definition (\ref{CMbis}) of $M(C)$, the latter estimates enable
us to derive that
\begin{align}
M(C) & = (C+1) \frac{G(C+1)}{G(C)} \sim C \exp [ H_{C+1} - H_C ] 
\nonumber \\
& \sim C \frac{e^k}{C^{\alpha}} = \rho_\alpha \; C \; q_C 
\end{align}
as claimed\end{proof} 

\begin{remark} Proposition \ref{ASYMPT} provides asymptotic (\ref{asymptfin}) for $M(C)$ under the weaker assumption that the popularity distribution $q_r$, $r \geq 1$, has a heavy tail of order $r^{-\alpha}$ for large $r$ and some $\alpha > 1$, without being precisely Zipf as in (\ref{Zipf}). In fact, all necessary properties for deriving Lemmas \ref{ASYMPTF} and \ref{SADDLEPOINTC} are based on that tail behavior only.
\label{tailonly}
\end{remark}

To close this section, we now address the asymptotic behavior of $M_r(C)$ defined in (\ref{mr}). 

\begin{prop}
For any Zipf popularity distribution with exponent $\alpha > 1$ and given the object rank $r$, the per-object miss probability $M_r(C)$ is estimated by
\begin{equation}
M_r(C) \sim \frac{\rho_\alpha r^\alpha}{C^\alpha + \rho_\alpha r^\alpha}
\label{asymptmrC}
\end{equation}
for large $C$, with prefactor $\rho_\alpha$ defined in (\ref{prefP}).
\label{mrAsympt}
\end{prop}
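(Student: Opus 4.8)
The plan is to mimic the derivation of Proposition \ref{ASYMPT}, but now tracking the per-object normalizing constant $G_r(C)$ instead of $G(C)$, and then to pass through the identity $M_r(C) = G_r(C)/G(C)$ from (\ref{mr}). First I would recall from (\ref{Fr}) that the generating function of $(G_r(C))_{C\ge 0}$ is $F_r(z)=F(z)/(1+q_r z)$. Since the extra factor $1/(1+q_r z)$ is, for fixed $r$, an entire-function-free rational correction that does not affect the super-exponential growth rate of the coefficients, I expect the saddle point $\theta_C^{(r)}$ solving $z F_r'(z) = C F_r(z)$ to satisfy $\theta_C^{(r)} = \theta_C (1 + o(1))$ with the \emph{same} leading term $C^\alpha/(A\rho_\alpha)$ as in Lemma \ref{SADDLEPOINTC}; the correction $-q_r/(1+q_r z)$ to $(\log F)'$ is $O(1/\theta_C)=O(C^{-\alpha})$ at the saddle and so is negligible at the order needed. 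Consequently Theorem \ref{LD} applies to $F_r$ as well, giving
\begin{equation}
G_r(C) \sim \frac{\exp(H_C^{(r)})}{\sigma\sqrt{2\pi C}}, \qquad H_C^{(r)} = \log F_r(\theta_C^{(r)}) - C\log\theta_C^{(r)}.
\nonumber
\end{equation}

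Next I would form the ratio $M_r(C) = G_r(C)/G(C)$. The Gaussian prefactors $\sigma\sqrt{2\pi C}$ cancel (the same $\sigma$ governs both $F$ and $F_r$, since $\log F_r - \log F = -\log(1+q_r z)$ contributes nothing to the second-order fluctuation term), so $M_r(C) \sim \exp(H_C^{(r)} - H_C)$. Using $\theta_C^{(r)} = \theta_C(1+o(1))$ and Lemma \ref{ASYMPTF}, the difference $\log F_r(\theta_C^{(r)}) - \log F(\theta_C)$ reduces to $-\log(1+q_r\theta_C) + o(1)$, while the $-C\log\theta_C^{(r)} + C\log\theta_C$ terms contribute $o(1)$. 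Therefore
\begin{equation}
M_r(C) \sim \frac{1}{1+q_r\theta_C}.
\nonumber
\end{equation}
Finally I would substitute the leading behavior $\theta_C \sim C^\alpha/(A\rho_\alpha)$ from Lemma \ref{SADDLEPOINTC} together with $q_r = A/r^\alpha$, which gives $q_r\theta_C \sim C^\alpha/(\rho_\alpha r^\alpha)$ and hence $M_r(C)\sim \rho_\alpha r^\alpha/(C^\alpha + \rho_\alpha r^\alpha)$, as claimed. As a consistency check, one can verify that this is compatible with Proposition \ref{ASYMPT}: weighting $M_r(C)$ by $q_r$ and summing, the dominant contribution comes from $r$ of order $C$, and $\sum_r q_r\,\rho_\alpha r^\alpha/(C^\alpha+\rho_\alpha r^\alpha)$ is asymptotic to $A\rho_\alpha C^{1-\alpha}$ times an integral that evaluates to $1$.

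The main obstacle is making rigorous the claim that the saddle point $\theta_C^{(r)}$ for $F_r$ coincides with $\theta_C$ to the order needed, and simultaneously that the condition (\ref{Gaussapprox2}) — the bound away from the positive axis — still holds for $F_r$; the rational factor $1/(1+q_r z)$ has a pole at $z=-1/q_r$ on the negative real axis but is bounded and bounded away from zero on the circle $|z|=\theta_C$ for $C$ large, so this should go through with only minor modifications of the argument in Appendix \ref{ProofAsympt}. A secondary delicate point is uniformity: (\ref{asymptmrC}) is stated for \emph{given} $r$, so all error terms may depend on $r$, which keeps the estimates clean; I would state the result in that fixed-$r$ regime and note (as in Remark \ref{tailonly}) that only the tail $q_r\sim A r^{-\alpha}$ is actually used.
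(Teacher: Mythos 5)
Your proposal is correct and follows essentially the same route as the paper's proof: apply Theorem \ref{LD} to $F_r(z)=F(z)/(1+q_rz)$, observe that the associated saddle point is asymptotically $\theta_C$, deduce $G_r(C)\sim G(C)/(1+q_r\theta_C)$, and conclude via $\theta_C\sim C^\alpha/(A\rho_\alpha)$. The only cosmetic difference is that you pass through the exponents $H_C^{(r)}-H_C$ where the paper states the ratio of coefficients directly, and both arguments leave the same details (the perturbed saddle-point equation and condition (\ref{Gaussapprox2}) for $F_r$) at the same level of informality.
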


\begin{proof} The generating function $F_r$ of the sequence $G_r(C)$, $C \geq 0$, being given by (\ref{Fr}), apply Theorem \ref{LD} to estimate coefficients $G_r(C)$ for large $C$. Concerning condition \textit{(i)}, the solution $\eta = \eta_C$ to equation $\eta F'_r(\eta) = C F_r(\eta)$ reduces to equation (\ref{saddlepoint0}) for $\theta = \theta_C$ where the term $q_r \theta/(1+q_r\theta)$ has been suppressed; but suppressing that term does not modify the estimate for $\theta_C$ with large $C$ so that $\eta_C \sim \theta_C$. On the other hand, condition \textit{(ii)} is readily verified by generating function $F_r$ and we then obtain 
\begin{equation}
G_r(C) \sim \frac{G(C)}{1+q_r \theta_C}.
\label{asymptmrC1}
\end{equation}
By Lemma \ref{SADDLEPOINTC}, we have $\theta_C \sim C^\alpha/A\rho_\alpha$ for large $C$; definition (\ref{mr}) of $M_r(C)$ and estimate (\ref{asymptmrC1}) with  $q_r = A/r^\alpha$ give 
\begin{equation}
M_r(C) = \frac{G_r(C)}{G(C)} \sim \frac{1}{1+q_r \theta_C} \sim
\frac{\rhoa \ral}{C^\alpha + \rhoa \ral}
\label{asymptmrC2}
\end{equation}
and result (\ref{asymptmrC}) follows
\end{proof}

For any value $\alpha > 1$, (\ref{asymptmrC}) is consistent with the fact that $M_r(C)$ is an increasing function of object rank $r$ and a decreasing function of cache size $C$.

\subsection{Comparing RND to LRU}

Let us now compare the latter results with the LRU replacement policy investigated in \cite{Flajolet1992birthday, JelenkovicAAP1999}. Recall that, for a Zipf popularity distribution with exponent $\alpha > 1$, the miss probability $M(C)$ for LRU is estimated by
$$
M(C) \sim \lambda_\alpha Cq_C
$$
for large $C$, with prefactor 
\begin{equation}
\lambda_\alpha = \frac{1}{\alpha}\left [ \Gamma \left ( 1 - \frac{1}{\alpha} \right ) \right ]^\alpha 
\label{prefP'}
\end{equation}
where $\Gamma$ is Gamma function (\cite{JelenkovicAAP1999}, Theorem 3). $\lambda_\alpha$ is estimated by 
$$
\lambda_\alpha \sim \frac{e^\gamma}{\alpha}, \; \; \; \; \lambda_\alpha \sim \dfrac{1}{\alpha - 1}
$$
as $\alpha \uparrow +\infty$ and $\alpha \downarrow 1$, respectively ($\gamma$ is Euler's constant and $e^\gamma \approx 1,781...$). In view of Proposition \ref{ASYMPT}, comparing asymptotics for coefficients $\lambda_\alpha$ and $\rho_\alpha$ shows that the difference $\rho_\alpha - \lambda_\alpha$ tends to 1 as $\alpha \uparrow
+\infty$, thus illustrating that LRU discipline performs better
than RND for large enough $\alpha$ (it can be formally shown that
$\rho_\alpha > \lambda_\alpha$ for all $\alpha > 1$). 
This difference diminishes, however, for smaller values of
$\alpha$ since $\rho_\alpha$ and $\lambda_\alpha$ behave
similarly as $\alpha$ is close to 1 (see Figure
\ref{Fig2}). Apart from that limited discrepancy, both
disciplines provide essentially similar performance levels in
terms of miss probabilities for large cache sizes.  

\begin{figure}[h!]
\begin{center}
\includegraphics[width=0.42\textwidth]{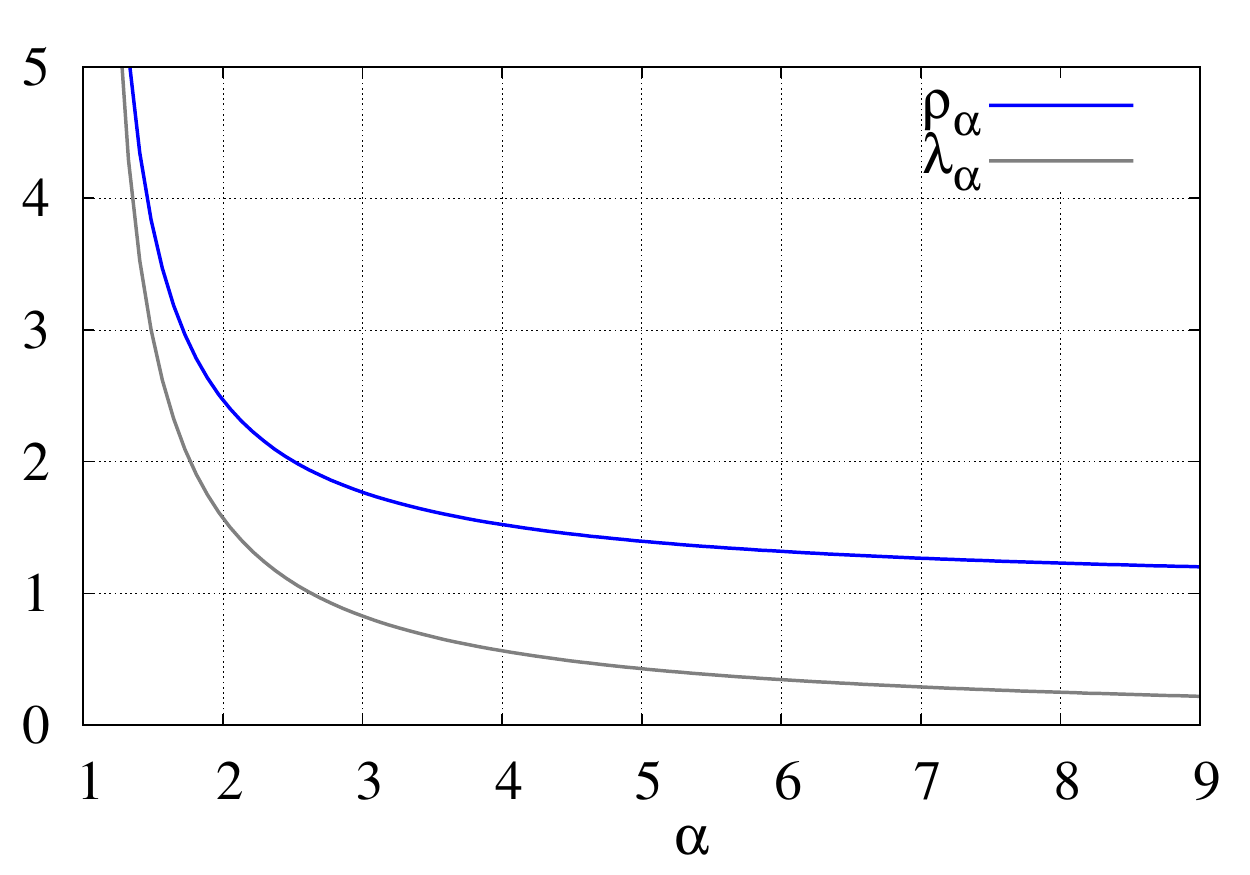}
\caption{\textit{Prefactors $\rho_\alpha$ and $\lambda_\alpha$ with $\alpha > 1$, for RND and LRU policies, respectively.}}
\label{Fig2}
\end{center}
\end{figure}

In contrast to heavy-tailed popularity distributions, we can consider a light-tailed distribution where
$$
q_r = A \exp(-B r^\beta), \; \; r \geq 1,
$$
with positive parameters $A, B, \beta$. It is shown in this case \cite{JelenkovicAAP1999} that the miss probability for LRU is asymptotic to
$$
M(C) \sim \frac{e^\gamma}{\beta B}C^{1-\beta}q_C
$$
for large $C$. For a geometric popularity distribution (with $\beta = 1$), the latter estimate shows that $M(C) = O(q_C)$; on the other hand, formula (\ref{geometric}) of Corollary \ref{GEO} shows that $M(C) = O(Cq_C)$ for RND discipline. This illustrates the fact that RND and LRU replacements provide significantly different performance levels if the popularity distribution is highly concentrated on a relatively small number of objects.

\section{Numerical results: single cache}
\label{sec:num_single_cache}

We here present numerical and simulation results to validate the
preceding estimates for a single RND policy cache. In the
following, when considering a finite object population with total
size $N$, the Zipf popularity distribution is normalized
accordingly. 
We also mention that the content popularity
distribution obviously refers to document classes instead of
individual documents. For comparison purpose with the existing
LRU analysis, we represent these classes by a single index, as
if they were a single document. In the following, cache sizes must
accordingly be scaled up to the typical class size.

Simulations are performed using an ad-hoc simulator written in
C. In every simulation, performance measures are collected after
the transient phase, once the system has reached the stationary
state. In this paper, transient behavior is  not considered; note
that the duration of the transient period obviously
increases with the cache size.  

\begin{figure}[!tb]
\begin{center}
\subfigure[]{\includegraphics[width=0.43\textwidth]{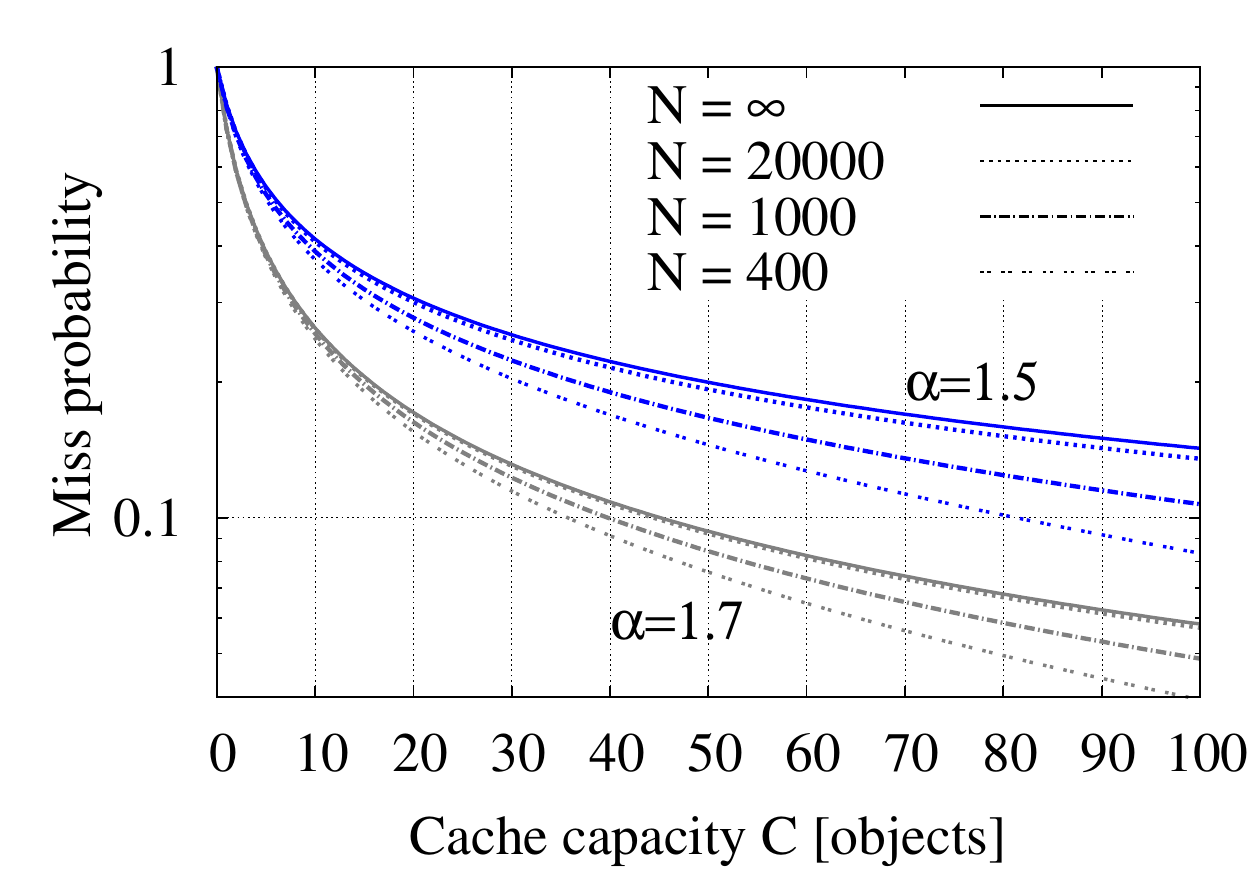}\label{fig:N_effect}}
\subfigure[]{\includegraphics[width=0.43\textwidth]{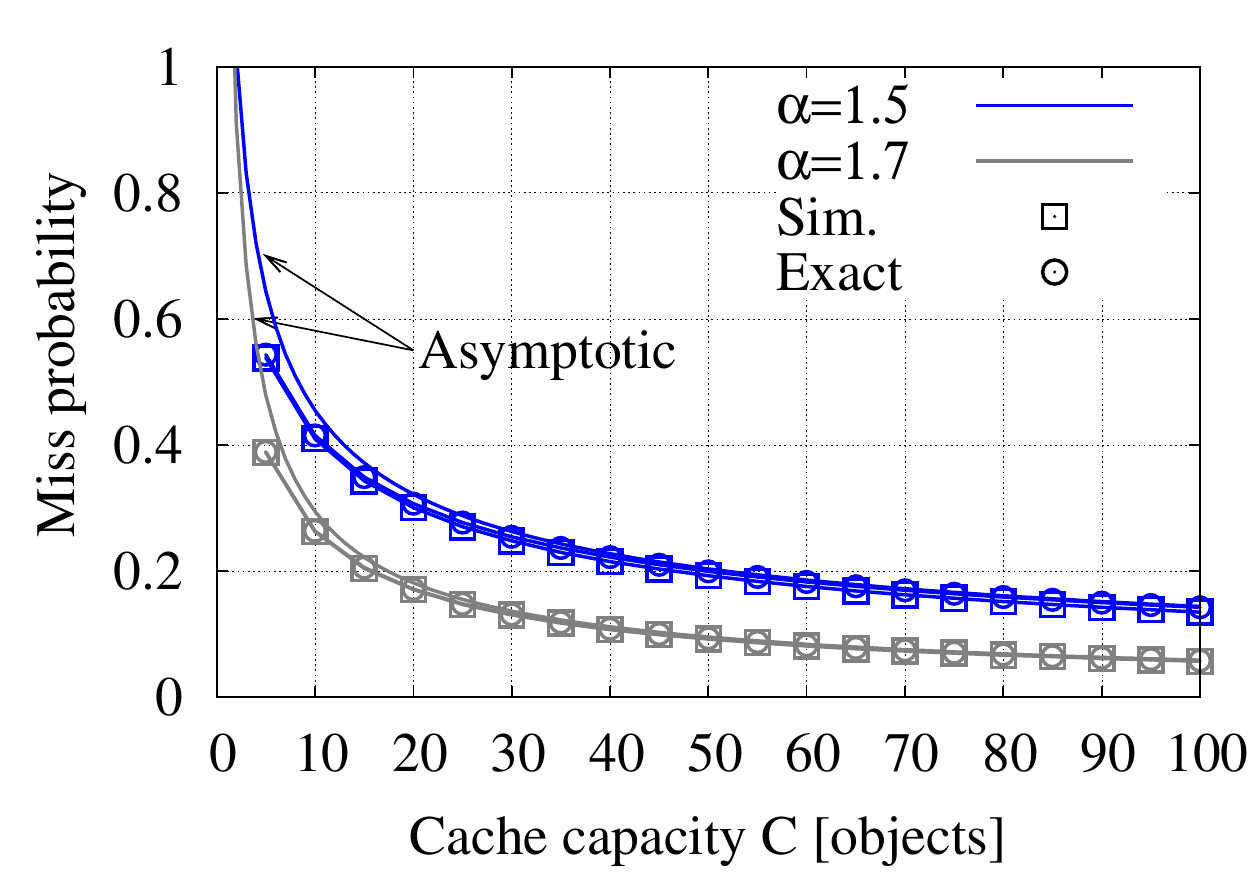}\label{fig:M_C}}
\subfigure[]{\includegraphics[width=0.43\textwidth]{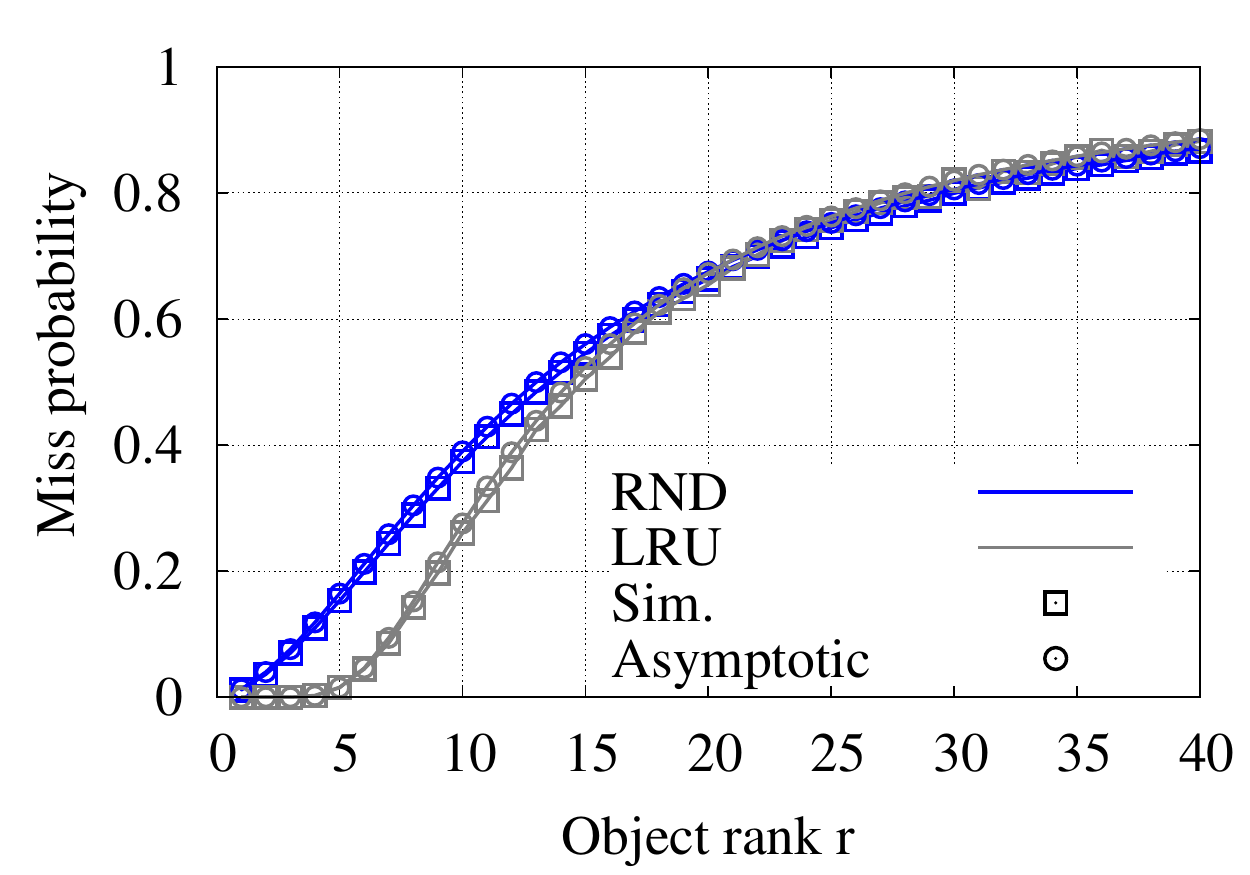}\label{fig:Mr_C25}}
\caption{Single cache results: (a) exact formula for $M(C)$ with RND policy (b) asymptotic  of $M(C)$ with RND policy (c) asymptotic of $M_r(C)$ with $C=25$, $\alpha=1.7$ for RND and LRU policies.}
\end{center}
\end{figure}

Besides, the most critical parameter in our simulation setting is
the numerical value of $\alpha$. 
As the Zipf distribution flattens when $\alpha$ get closer to 1, 
much longer simulation runs are necessary to have good estimates of
the miss rate. 
Small enough values of $\alpha$
must, nevertheless, be considered as they are more realistic. Estimates of $\alpha$ are reported, in particular, in
\cite{Mitra:2011:CWV:1961659.1961662} for web sites providing
access to video content like \url{www.metacafe.com} for which
$\alpha=1.43$, \url{www.dailymotion.com} and \url{www.veoh.com}
for which $\alpha=1.72$ and $\alpha = 1.76$, respectively. In the
following, we hence fix $\alpha = 1.5$ or $\alpha = 1.7$ in
our numerical experiments.

Fig.~\ref{fig:N_effect} first reports exact formula (\ref{CM})
for $M(C) = M(C;N)$ as a function of cache size $C$ and for
increasing values of total population $N$, where $M(C;N)$
measures the total miss probability for a cache of size $C$ when
the number of objects $N$ is finite. As expected, the
convergence speed of $M(C;N)$ to $M(C;\infty)$ as  $N \uparrow
+\infty$ increases with $\alpha$. In the case $\alpha=1.5$ for
instance, a population of $N=20 \; 000$ can be considered a good
approximation for an infinite object population
($N=\infty$), whilst there is almost no difference between
$M(C;N=20 \; 000)$ and $M(C;N=\infty)$ when $\alpha=1.7$.

In Fig.~\ref{fig:M_C}, we compare exact formula (\ref{CM}) for
$M(C)$, asymptotic (\ref{asymptfin}) and simulation results for
the above scenario. Formula (\ref{CM}) for $N=\infty$ is 
computed with arbitrary precision and we used $N=20 \; 000$ for
simulation as a good approximation for an infinite object
population. Simulation and exact results are very close
(especially for $\alpha=1.7$) while asymptotic (\ref{asymptfin})
gives a very good estimation of the miss probability as soon as
$C \geq 20$. 

Fig.~\ref{fig:Mr_C25} presents the miss probability $M_r(C)$ as a
function of the object rank $r$ for both RND and LRU policies
with fixed $C = 25$ and $\alpha=1.7$. Results are reported for
the most popular classes and confirm the asymptotic accuracy of
estimate (\ref{asymptmrC}) for RND and the corresponding one for
LRU policy \cite{Jelenkovic2008PER}. Beside the good
approximation provided by the asymptotics, it is important to
remark that RND and LRU performance are very close when object
rank $r \geq 15$, while there is a slight difference for the most
popular objects (say $r < 15$). Moreover, comparing $M(C=25)$ for
RND and LRU (respectively equal to 0.147 and 0.108), we observe
only 4\% less of miss probability using LRU with respect to RND
policy. This may suggest RND as a good candidate for caches
working at very high speed, where LRU may become too expensive in
terms of computation due to its relative complexity.  

\section{In-network cache model}
\label{sec:network_cache} 

In order to generalize the single-cache model, networks of caches with various topologies  can be considered.

\subsection{Line topology}
We first consider the tandem system defined as follows. Any request is addressed to a first cache $\sharp 1$ with size $C_1$; if it is not satisfied, it is addressed to a second cache $\sharp 2$ with size $C_2$:
\\
\indent 
- if this request is satisfied at cache $\sharp 2$, the object is copied to cache $\sharp 1$, with replacement performed according to the RND discipline; 
\\
\indent
- if this request is not satisfied at cache $\sharp 2$, the object is retrieved from a repository server and copied in caches $\sharp 1$ AND $\sharp 2$ according to the RND discipline. 
\\
\indent
Note that this replacement scheme, hereafter denoted by \textbf{IPC} for \textbf{In-Path Caching}, ignores any collaboration between the two caches and blindly copies objects in all caches along the path towards the requesting source. 
\\
\indent
We now fix some notation and properties for this tandem model. Let $R_1(t) \in \{1, 2, ..., N\}$ denote the object requested at cache $\sharp 1$ at time $t$; we still assume that variables $R_1(t)$, $t \in \mathbb{N}$, describe an IRM process with distribution defined by $\mathbb{P}(R_1 = r) = q_r$, $1 \leq r \leq N$. Denoting by $\mathbf{S}_1(t)$ (resp. $\mathbf{S}_2(t)$) the state vector of cache $\sharp 1$ (resp. $\sharp 2$) at time $t$, the bivariate process $(\mathbf{S}_1(t), \mathbf{S}_2(t))_{t \in \mathbb{N}}$ is easily shown to define a Markov process that, however, is not reversible. 
It is therefore unlikely that an exact closed form for the stationary distribution of process $(\mathbf{S}_1,\mathbf{S}_2)$ can be derived to evaluate the miss probability for the two-cache system. 
\\
\indent
Alternatively, we here follow an approach based on the approximation of the request process to cache $\sharp 2$. Let $t_n$, $n \in \mathbb{N}$, denote the successive instants when a miss occurs at first cache $\sharp 1$, and $R_2(n)$ be the object corresponding to that miss event at time $t_n$. First note that the common distribution of variables $R_2(n)$ is the stationary distribution $(q_r(2))_{r \in \mathbb{N}}$ introduced in Lemma \ref{GENr}, equation (\ref{dismissr}), with cache size $C$ replaced by $C_1$. In the following, we will further assume that 

\begin{itemize}
\item[\textbf{(H)}] the request process for cache $\sharp 2$ is an IRM, that is, all variables $R_2(n)$, $n \in \mathbb{N}$, are independent with common distribution
$$
\mathbb{P}(R_2 = r) = q_r(2), \; \; r \in \mathbb{N}.
$$
\end{itemize}

The simplifying assumption \textbf{(H)} neglects any correlation structure for the output process of cache $\sharp 1$ (that is, the input to cache $\sharp 2$) produced by consecutive missed requests. Recall also that the tail of distribution $(q_r(2))_{r \in \mathbb{N}}$, defined by (\ref{asymptqtilde}), is proportional to that of distribution $(q_r)_{r \in \mathbb{N}}$.

The latter 2-stage tandem model can be easily extended to a tandem network consisting in a series of $K$ caches ($K > 2$) where any request dismissed at caches $\sharp 1, ..., \sharp \ell$, $\ell \geq 1$, is addressed to cache $\sharp (\ell + 1)$. As an immediate generalization of the \textbf{IPC} scheme, we assume that any requested document which experiences a miss at cache $\sharp j$, $1 \leq j \leq \ell$, and an object hit at cache $\sharp (\ell+1)$ is copied backwards at all downstream caches $\sharp 1, ..., \sharp \ell$. A request miss therefore corresponds to a miss event at each cache $1, 2, ..., K$.  Furthermore, assumption \textbf{(H)} is generalized by saying that any cache $\sharp \ell$ considered in isolation behaves as a single cache with IRM input produced by consecutive missed requests at cache $\sharp (\ell -1)$. The size of cache $\sharp \ell$ is denoted by $C_\ell$. 

In the following, the "global" miss probability  $M_r(C_1,\ldots, C_\ell)$ (resp. "local" miss probability $M^*_r(C_1, \ldots, C_\ell)$) for request $r$ at cache $\ell$ is the miss probability for object $r$ over all caches $1, ..., \ell$ (resp. the miss probability for object $r$ at cache $\ell$) so that 
\begin{equation}
M_r(C_1, \ldots, C_\ell) = \prod_{j=1}^{\ell} M^*_r(C_1, \ldots, C_j)
\label{independence}
\end{equation} 
(note that for a single cache, we have $M_r(C_1) =
M_r^{*}(C_1)$). To simplify notation, we abusively write $M_r(\ell)$ (resp. $M^*_r(\ell)$) instead of $M_r(C_1, \ldots, C_\ell)$ (resp. instead of $M^*_r(C_1, \ldots, C_\ell)$). Finally, if $q_r(\ell)$, $r \geq 1$, defines the distribution of the input process at cache $\sharp \ell$, the averaged local miss probability $M^*(\ell)$ at cache $\sharp \ell$ is given by
\begin{equation}
M^*(\ell) = \sum_{r \geq 1} M_r^*(\ell)q_r(\ell)
\label{averagelocal}
\end{equation}
for any $\ell \in \{1, ..., K \}$.

\begin{prop} (See Proof in \textbf{Appendix \ref{ProofApproxNCRk}})
\\
For the $K$-caches tandem system with \textbf{IPC} scheme, suppose that the request process at cache $\sharp 1$ is IRM with Zipf popularity distribution with exponent $\alpha > 1$, and that assumption \textbf{(H)} holds for all caches $\sharp 2, ..., \sharp K$. 
\\
\indent
For any $\ell \in \{1, ..., K \}$ and large cache sizes $C_1, ..., C_\ell$, the global miss probability $M_r(\ell)$ (resp. local miss probability $M^*_r(\ell)$) is given by
\begin{equation}
M_r(\ell) \sim \frac{\rho_\alpha r^\alpha}{\displaystyle \rho_\alpha r^\alpha + \sum_{j=1}^\ell C_j^\alpha}, \; \; \; M^*_r(\ell) \sim \frac{\displaystyle \rho_\alpha r^\alpha + \sum_{j=1}^{\ell-1} C_j^\alpha}{\displaystyle \rho_\alpha r^\alpha + \sum_{j=1}^\ell C_j^\alpha}.
\label{Miss-r-NCRk}
\end{equation}
\label{APPROXNCRK}
\end{prop}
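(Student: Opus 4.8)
The plan is to proceed by induction on $\ell$, feeding the single-cache asymptotics of Section~\ref{sec:large_cache} into the recursive structure of the input distributions $(q_r(\ell))_{r\ge 1}$. Write $\Sigma_\ell=\sum_{j=1}^{\ell}C_j^{\alpha}$, so that the two claimed equivalences read $M_r(\ell)\sim \rho_\alpha r^\alpha/(\rho_\alpha r^\alpha+\Sigma_\ell)$ and $M_r^*(\ell)\sim(\rho_\alpha r^\alpha+\Sigma_{\ell-1})/(\rho_\alpha r^\alpha+\Sigma_\ell)$. Since $M_r(\ell)=\prod_{j=1}^{\ell}M_r^*(j)$ by (\ref{independence}), it is enough to prove the first equivalence; the second then follows from $M_r^*(\ell)=M_r(\ell)/M_r(\ell-1)$. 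For $\ell=1$ the claim is exactly Proposition~\ref{mrAsympt}, since $q_r(1)=q_r=A/r^{\alpha}$.

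For the inductive step, assume the formula for $M_r(j-1)$ with $j\ge 2$. Iterating the output-distribution relation (\ref{dismissr})--(\ref{averagelocal}) from cache $\sharp 1$ to cache $\sharp(j-1)$ gives
\[
q_r(j)=q_r\,\frac{\prod_{i=1}^{j-1}M_r^*(i)}{\prod_{i=1}^{j-1}M^*(i)}=\frac{q_r\,M_r(j-1)}{P_{j-1}},\qquad P_{j-1}:=\prod_{i=1}^{j-1}M^*(i),
\]
where $P_{j-1}>0$ does not depend on $r$. Because $M_r^*(i)\to 1$ as $r\uparrow+\infty$ for every $i$ (the single-cache property recalled just after Lemma~\ref{GENr}), we get $M_r(j-1)\to 1$, hence $q_r(j)\sim (A/P_{j-1})\,r^{-\alpha}$: the input offered to cache $\sharp j$ is again heavy-tailed of order $r^{-\alpha}$, with effective constant $A_j=A/P_{j-1}$. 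By Remark~\ref{tailonly}, Lemmas~\ref{ASYMPTF}--\ref{SADDLEPOINTC} then apply to the generating function $F^{(j)}(z)=\prod_{r\ge1}(1+q_r(j)z)$ of cache $\sharp j$, so that its saddle point $\theta^{(j)}_{C_j}$ of equation (\ref{saddlepoint0}) satisfies $\theta^{(j)}_{C_j}\sim C_j^{\alpha}/(A_j\rho_\alpha)=C_j^{\alpha}P_{j-1}/(A\rho_\alpha)$; and, exactly as in the proof of Proposition~\ref{mrAsympt} (applying Theorem~\ref{LD} to $F^{(j)}(z)/(1+q_r(j)z)$), the local per-object miss obeys $M_r^*(j)\sim 1/(1+q_r(j)\,\theta^{(j)}_{C_j})$.

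It then remains to evaluate $q_r(j)\,\theta^{(j)}_{C_j}$, and here the constants $P_{j-1}$ cancel: using $q_r=A/r^{\alpha}$ and the induction hypothesis for $M_r(j-1)$,
\[
q_r(j)\,\theta^{(j)}_{C_j}\ \sim\ \frac{q_r\,M_r(j-1)}{P_{j-1}}\cdot\frac{C_j^{\alpha}P_{j-1}}{A\rho_\alpha}\ =\ \frac{M_r(j-1)\,C_j^{\alpha}}{\rho_\alpha r^{\alpha}}\ \sim\ \frac{C_j^{\alpha}}{\rho_\alpha r^{\alpha}+\Sigma_{j-1}}.
\]
Hence $M_r^*(j)\sim(\rho_\alpha r^\alpha+\Sigma_{j-1})/(\rho_\alpha r^\alpha+\Sigma_j)$, which is the asserted local formula, and (\ref{independence}) gives $M_r(j)=M_r(j-1)\,M_r^*(j)\sim\rho_\alpha r^\alpha/(\rho_\alpha r^\alpha+\Sigma_j)$, closing the induction.

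The main obstacle is analytic rather than combinatorial. First, for $j\ge 2$ cache $\sharp j$ is fed by the non-Zipf law $q_r(j)$, so one must check that conditions (\ref{Gaussapprox1})--(\ref{Gaussapprox2}) of Theorem~\ref{LD}, together with the hypotheses behind Lemmas~\ref{ASYMPTF}--\ref{SADDLEPOINTC}, really depend only on the $r^{-\alpha}$ tail --- this is the content of Remark~\ref{tailonly}, but it has to be established for the specific functions $F^{(j)}$. Second, and more delicate, the conclusion chains several asymptotic equivalences obtained by letting $C_1,\dots,C_\ell\uparrow+\infty$; to give the symbol $\sim$ a precise meaning one should fix a scaling (for instance $C_j=\lfloor c_jC\rfloor$ with fixed $c_j>0$ and $C\uparrow+\infty$) and control the error terms uniformly in $r$ --- notably those in $\theta^{(j)}_{C_j}\sim C_j^\alpha/(A_j\rho_\alpha)$, in the relation between the saddle points of $F^{(j)}$ and $F^{(j)}(z)/(1+q_r(j)z)$, and in $M_r^*(j)\sim 1/(1+q_r(j)\theta^{(j)}_{C_j})$ --- so that they remain negligible when composed. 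It is exactly the cancellation of $P_{j-1}$ in $q_r(j)\theta^{(j)}_{C_j}$ that makes the recursion stable and keeps the final formula free of these possibly vanishing or diverging constants.
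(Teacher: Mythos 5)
Your proposal is correct and follows essentially the same route as the paper's Appendix proof: the same output-distribution recursion $q_r(\ell+1)=q_r(\ell)M_r^*(\ell)/M^*(\ell)$, the same observation that $q_r(\ell+1)\sim A(\ell)r^{-\alpha}$ so that the single-cache estimate (\ref{asymptmrC2}) applies with saddle point $\theta(\ell+1)\sim C_{\ell+1}^\alpha/A(\ell)\rho_\alpha$, and the same cancellation of the averaged-miss constants in the product $q_r(\ell+1)\theta(\ell+1)$; the paper merely phrases the induction as the telescoping recursion $1/M_r(\ell+1)\sim 1/M_r(\ell)+C_{\ell+1}^\alpha/\rho_\alpha r^\alpha$, which is equivalent to your explicit substitution of the induction hypothesis. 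Your closing remarks on verifying Theorem~\ref{LD}'s hypotheses for the non-Zipf inputs and on the uniformity of the chained equivalences identify real gaps that the paper itself glosses over via Remark~\ref{tailonly}, but they do not alter the argument.
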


Proposition \ref{APPROXNCRK} shows how the $K$-stage tandem system with \textbf{IPC} scheme improves the performance in terms of miss probability by adding a term $C_j^\alpha$ when  the $j$-th cache is added to the path. From Propositions \ref{APPROXNCRK} and \ref{ASYMPT}, it is readily derived that the average global miss probability $M(\ell)$ for all objects requested along the cache network is
\begin{equation}
M(\ell) = \sum_{r \geq 1} M_r(\ell) q_r \sim
\frac{A \rhoa}
{\left(\displaystyle \sum_{j=1}^\ell C_\ell^\alpha\right)^{1-\frac{1}{\alpha}}}
\label{MissNCRk}
\end{equation}
for any $\ell \in \{1, ..., K \}$ and large cache sizes $C_1$, ..., $C_\ell$.

\subsection{Tree topology} 
\label{TreeSection}

The previous linear network model can be easily extended to the homogeneous tree topology with Zipf distributed requests. By homogeneous, we mean that all leaves of the tree are located at a common depth of the root, and that the cache size for each node at a given level $i$ is equal to $C_i$ (where $C_1$ is the cache size of the leaves). An example of such a tree is a complete binary tree of given height.

Let $\Lambda_1, \ldots, \Lambda_J$ be the $J$ leaves of the
tree. We assume that all requests arrive at the leaves, following
an IRM, that is, 
$\mathbb{P}(R(t) = r, \Lambda(t)=j) = p_j q_r$
for all $1 \leq r \leq N$, $1 \leq j \leq J$,
where $\left(p_1, \ldots, p_J \right)$
are positive values such that $\Sigma_{1 \leq j \leq J} \; p_j = 1$ and $\Lambda(t)$ denotes the leaf where the request $t$ arrives at time $t$. Requests are served according to the IPC rule, \ie, are forwarded upwards until the content is found, and the content is then copied in each cache between this location and the addressed leaf. 

\begin{corol}
Consider a homogeneous tree with \textbf{IPC} scheme and suppose that assumption \textbf{(H)} holds for all its internal nodes. The results of Proposition \ref{APPROXNCRK} then extend to that tree with IRM request process at leaves and Zipf popularity distribution with exponent $\alpha > 1$. 
\end{corol}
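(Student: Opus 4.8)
The plan is to show that, under assumption \textbf{(H)}, a homogeneous tree with the \textbf{IPC} scheme is \emph{path-equivalent} to the $K$-stage tandem system of Proposition \ref{APPROXNCRK}, and then to invoke that proposition along each leaf-to-root path.

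First I would fix a leaf $\Lambda_j$ and follow the unique path from $\Lambda_j$ up to the root; by homogeneity it visits caches of sizes $C_1, C_2, \ldots, C_\ell$, where $\ell$ is the common depth of the tree and $C_1$ is the leaf cache size. Under the \textbf{IPC} rule, a request arriving at $\Lambda_j$ experiences a miss over the first $j$ caches on the path if and only if it misses at each of them, so the product formula (\ref{independence}) still holds along the path. It then remains to identify, for each level $i$, the law of the input process to the level-$i$ cache.

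The key step — a superposition lemma — is to prove by induction on $i$ that the input process to every level-$i$ node is an IRM with common marginal $q_r(i)$, where $(q_r(i))_{r\ge 1}$ obeys exactly the recursion used in the proof of Proposition \ref{APPROXNCRK}: $q_r(1)=q_r$, and $q_r(i)$ is obtained from $q_r(i-1)$ and $C_{i-1}$ through (\ref{dismissr}) applied to a single RND cache of size $C_{i-1}$ fed with law $q_r(i-1)$. The base case $i=1$ is the tree model's assumption that requests arrive at the leaves following an IRM with law $q_r$. For the inductive step, a level-$i$ node receives the superposition of the miss streams emitted by its level-$(i-1)$ children; by homogeneity these children are roots of isomorphic subtrees, and since the per-object and average miss probabilities $M_r(C)$, $M(C)$ of a single RND cache depend only on the popularity distribution and the cache size, not on how often the node is addressed (Lemmas \ref{MAlter} and \ref{GENr}), the induction hypothesis forces each child to emit a miss stream with the same marginal $q_r(i)$. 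A mixture of streams sharing the marginal $q_r(i)$ still has marginal $q_r(i)$, and assumption \textbf{(H)} supplies independence in time; this closes the induction. In particular the law of the input to the level-$i$ cache is independent of the branch and of the leaf weights $p_j$.

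Consequently the chain of per-object miss probabilities $M^*_r(1), \ldots, M^*_r(\ell)$ along the chosen path solves the very same system as in the tandem case, so Proposition \ref{APPROXNCRK} applies verbatim and yields (\ref{Miss-r-NCRk}) with $\sum_{j=1}^\ell C_j^\alpha$ now running over the cache sizes of the levels along that path. Averaging $M_r(\ell)$ against the leaf popularity law $q_r = A/r^\alpha$ reproduces, as in (\ref{MissNCRk}), the estimate $M(\ell) \sim A\rhoa \big/ \big(\sum_{j=1}^\ell C_j^\alpha\big)^{1-1/\alpha}$. I expect the only real obstacle to be a clean statement and proof of the superposition lemma: one must argue carefully that the combinatorially involved aggregation of differently-loaded miss streams at an internal node collapses — once \textbf{(H)} is granted — to the rate-free recursion on the marginal laws, and that homogeneity is precisely what guarantees that all subtrees feeding a common parent deliver identical marginals, so that the weights $p_j$ never enter the asymptotics.
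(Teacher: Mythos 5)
Your proposal is correct and follows the same overall strategy as the paper: reduce the tree to the tandem case by arguing, level by level, that the merged miss streams feeding each internal node have the same per-request marginal $q_r(i)$ as in the $K$-stage tandem, so that Proposition \ref{APPROXNCRK} applies along each leaf-to-root path and the leaf weights $p_j$ drop out. The one point you flag as an obstacle --- a clean proof of the superposition lemma --- is exactly what the paper's proof supplies via Poissonization: since only the order of requests matters, one embeds the request sequence in a Poisson process of intensity $1$; independent thinning makes the per-object request stream at leaf $j$ Poisson of intensity $p_jq_r$, assumption \textbf{(H)} makes the miss stream for object $r$ Poisson of intensity $p_jq_rM_r^*(1)$, and merging Poisson processes at the parent is exact, yielding a Poisson input whose object marks have law $q_rM_r^*(1)/M(1)=q_r(2)$ regardless of the relative loads of the children. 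Your direct argument (a mixture of streams with a common marginal keeps that marginal, with \textbf{(H)} supplying independence in time) reaches the same conclusion; adopting the Poisson thinning/merging device would simply turn your acknowledged gap into a one-line standard fact.
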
\label{tree}

\begin{proof}
Only the order of requests in time matters since their precise timing is irrelevant; we can consequently assume that the requests arrive according to a Poisson process with intensity $1$. From the property of independent thinning and merging of Poisson processes, it follows that the requests for a given object $r$ at leaf $j$ is also a Poisson process of intensity 
$p_j q_r$, and that the request process at leaf $j$ is a Poisson process with intensity $p_j$ with a Zipf popularity distribution $q_r = A/r^\alpha$, $r \geq 1$. Now, using assumption \textbf{(H)} and applying the previous results for a single cache to each leaf, we deduce  that at any leaf $j$, the miss sequence for object $r$ is a Poisson process with intensity $p_j q_r M_r^{*}(1)$. Merging these miss sequences from all children of a given second-level node, we deduce that the requests at this node follow a Poisson process and that the probability of request for an object $r$ is  $q_r M_r^{*}(1) / M(1) = q_r(2)$. This process has the same properties as the IRM process with distribution $(q_r(2))_{r \in \mathbb{N}}$ used in the proof of Proposition \ref{APPROXNCRK}, which therefore applies. Repeating recursively this reasoning at each level, we conclude that Proposition \ref{APPROXNCRK} holds in this context
\end{proof}

\begin{remark} Corollary \ref{tree} is also valid for a homogeneous tree where different replacement policies are used at different levels $i$ (e.g. Random at first level and LRU at  second one).
\end{remark}

\section{Numerical results: network of caches}
\label{sec:num_network_cache}
\begin{figure}[!b]
\begin{center}
\subfigure[]{\includegraphics[width=0.45\textwidth]{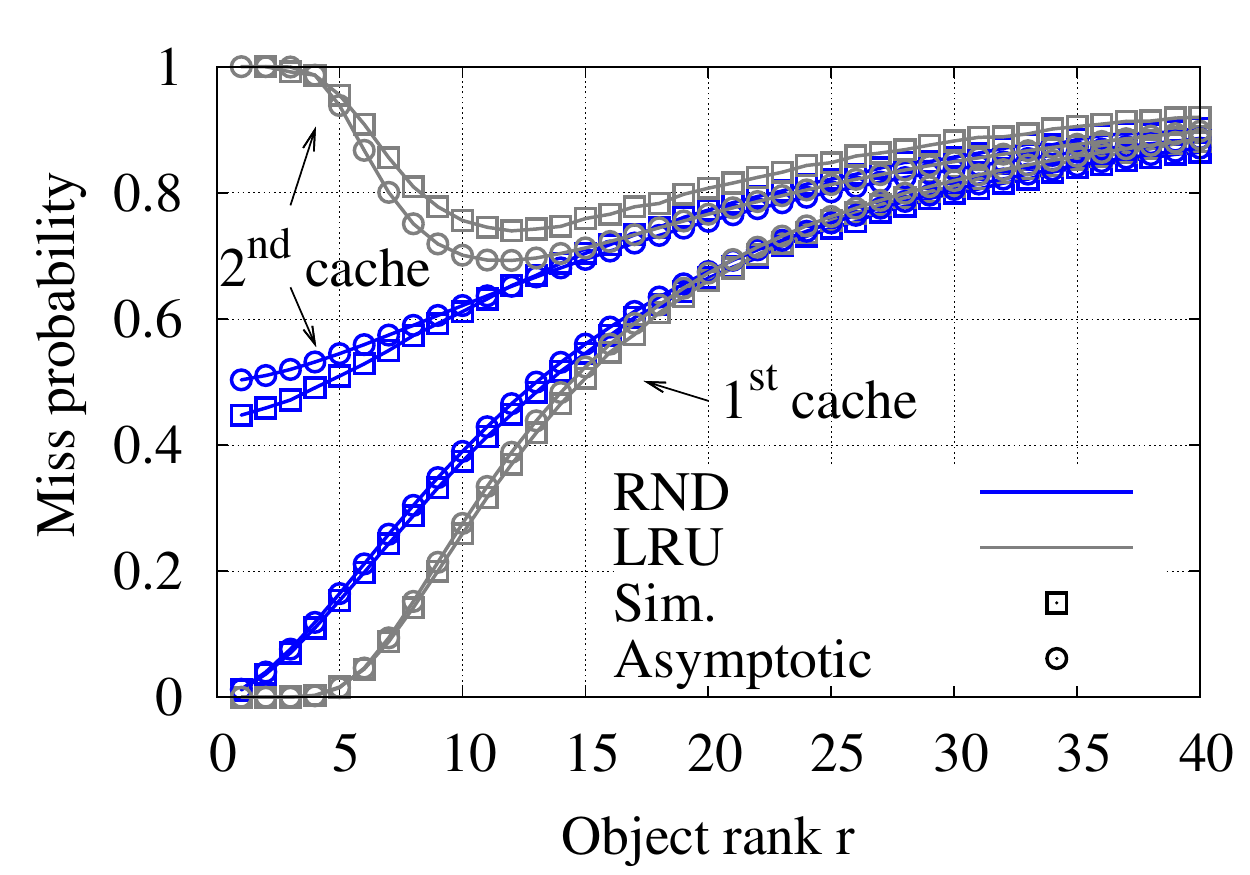}\label{fig:tandem}}
\subfigure[]{\includegraphics[width=0.45\textwidth]{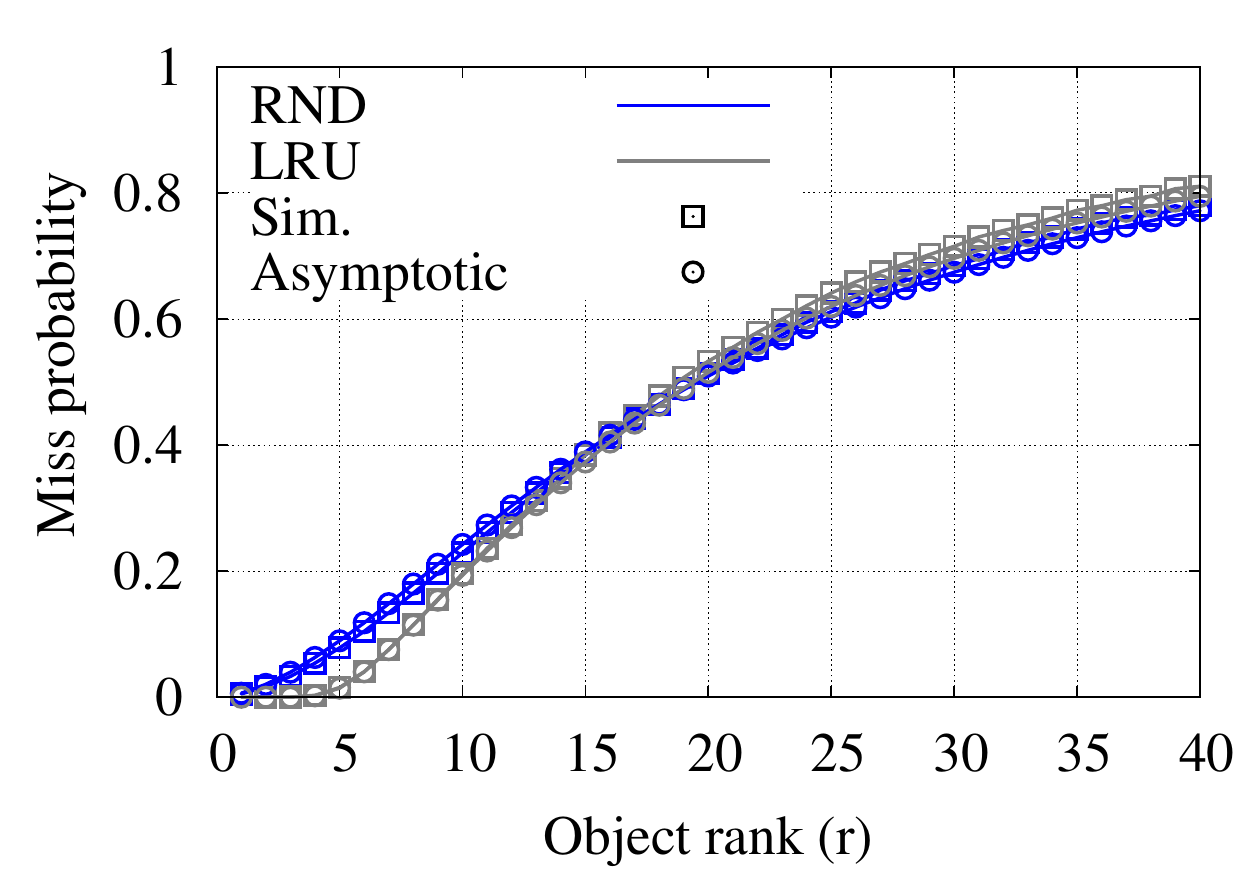}\label{fig:tandem_total_per_class}}
\subfigure[]{\includegraphics[width=0.45\textwidth]{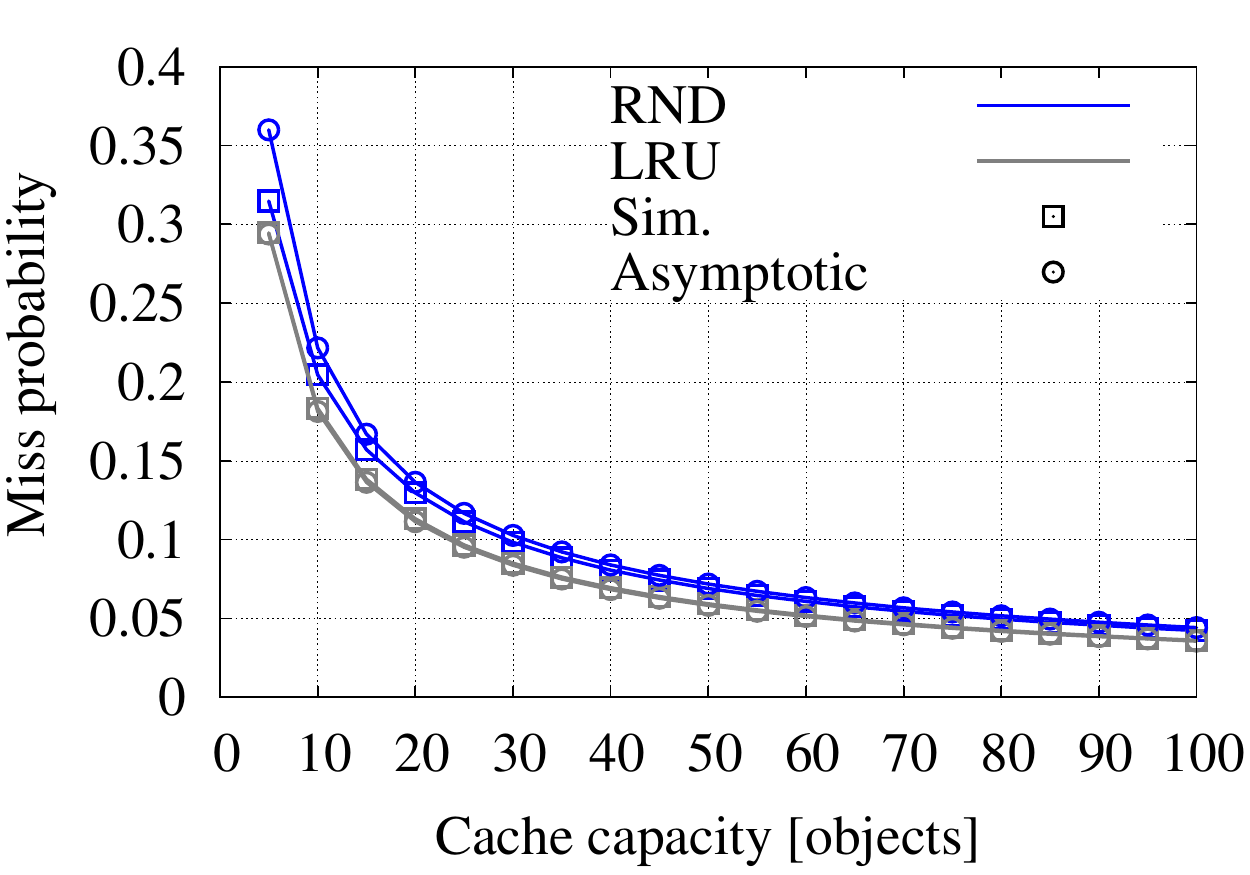}\label{fig:tandem_total}}
\caption{Tandem cache results: (a) asymptotics of $M^*_r(1)$, $M^*_r(2)$ (b) and of   $M_r(2)$ for RND and LRU policies compared to simulation with $C_1=C_2=25$, $\alpha=1.7$ (c) Asymptotic for $M(2)=M(C_1,C_2)$ with $C_1=C_2\leq 100$, $\alpha=1.7$.}
\end{center}
\end{figure} 
We here report numerical and simulation results to show the accuracy of the approximations presented in previous Section \ref{sec:network_cache}.

Fig.~\ref{fig:tandem} first reports estimate (\ref{Miss-r-NCRk})
of $M_r^*(1)$ and  $M_r^*(2)$  for both RND and LRU (the
approximation for the tandem LRU are taken from
\cite{muscaITC2011}) with $C_1=C_2=25$. We focus on the second
cache, as the performance of the first one has been largely
analyzed in previous sections. We note a good agreement between
the approximations evaluated in Section \ref{sec:network_cache}
and  simulation results.  

Moreover, while less popular objects are affected in the same way
when employing either RND or LRU (in our specific example, $r\geq
15$), a significantly different behavior is detectable for
popular objects ($r<15$). Local miss probabilities $M_r^*(1)$ and
$M_r^*(2)$ help understanding where an object has been cached,
conditioned on its rank. The combination of LRU and IPC clearly
tends to favor stationary 
configurations where popular objects are likely to be stored in
the first cache (see also \cite{muscaITC2011} for a similar
discussion). When using RND instead of LRU, however, the
distribution of the content across the two caches is fairly
different; in Fig.~\ref{fig:tandem} for example, while the most
popular objects are likely to be retrieved at the first cache
when using either LRU or RND, only by using RND can such an
object be also found in the second cache. 
It therefore appears that while both LRU and RND tend to store
objects proportionally to their popularity, RND more evenly
distributes objects across the whole path.  

Fig.~\ref{fig:tandem_total_per_class} reports the total miss
probability at the second cache $M_r(2) = M^*_r(1)M^*_r(2) $,
\ie, the probability to query an object of rank $r$ at the
repository server. In this example, we see that objects with rank
$r<15$ are slightly more frequently requested at the server when
using RND rather than LRU, but RND is more favorable than LRU for
objects with higher rank $r\geq 15$. In average, the total miss
probability  at the second cache $M(2)$, reported in
Fig.~\ref{fig:tandem_total}, is very similar either using RND or
LRU, with a slight advantage to LRU. $M(2)$ indicates the amount of data that is to be requested
from the server.  

With no claim of generality, we notice that the approximations calculated in Section \ref{sec:network_cache} for RND and in \cite{muscaICN2011}, \cite{muscaITC2011} for LRU are very accurate. Furthermore, the approximations work well for a large number of tests that we do not report here because of lack of space.
  
\section{Mixture of RND and LRU}
\label{sec:mixed}

So far, we have considered networks of caches where all caches use the RND replacement policy. In practice, it is feasible to use different replacement algorithms in the same network. This section addresses the case of a tandem network, where one cache uses the RND replacement algorithm while the other uses the LRU algorithm. As in Section \ref{TreeSection}, these results also hold in the case of an homogeneous tree.

\subsection{Large cache size estimates}
We first provide estimates for miss probabilities in the case when cache sizes $C_1$ and $C_2$ are large.

\begin{prop}
For the $2$-caches tandem system with \textbf{IPC} scheme, suppose the request process at cache $\sharp 1$ is IRM with Zipf popularity distribution with exponent $\alpha > 1$ and that assumption \textbf{(H)} for cache $\sharp 2$ holds. 
\\
\textbf{I)} When cache $\sharp 1$ (resp. cache $\sharp 2$) uses the RND (resp. LRU) replacement policy, the global (resp. local) miss probability $M_r(2)$ (resp. $M^*_r(2)$) on cache $\sharp 2$ is given by
\begin{align}
M_r(2) & \sim 
\frac{\rhoa\ral}{\displaystyle \rhoa \ral + C_1^\alpha} 
\exp \left ( -\frac{\displaystyle \rhoa C_2^\alpha}
{\alpha \lambda_\alpha \left( \rhoa \ral + C_1^\alpha \right)} \right ),
\nonumber\\
M^*_r(2) & \sim 
\exp \left ( -\frac{\displaystyle \rhoa C_2^\alpha}
{\alpha \lambda_\alpha \left(\rhoa\ral+C_1^\alpha\right)} \right )
\label{MissRandLRU}
\end{align}
for large cache sizes $C_1$, $C_2$ and constants $\rho_\alpha$, $\lambda_\alpha$ introduced in (\ref{prefP}) and (\ref{prefP'}). 
\\
\textbf{II)} When cache $\sharp 1$ (resp. cache $\sharp 2$) uses the LRU (resp. RND) replacement policy, the global (resp. local) miss probability $M_r(2)$ (resp. $M^*_r(2)$) on cache $\sharp 2$ is given by
\begin{align}
M_r(2) & \sim  
\frac{\rho_\alpha r^\alpha}
{\displaystyle \rho_\alpha r^\alpha \exp \left ( \frac{C_1^\alpha}{\alpha \lambda_\alpha r^\alpha } \right ) + C_2^\alpha }, 
\nonumber\\
M^*_r(2) & \sim  
\frac{\rho_\alpha r^\alpha}
{\displaystyle \rho_\alpha r^\alpha + C_2^\alpha
  \exp \left ( -\frac{C_1^\alpha}{\alpha \lambda_\alpha r^\alpha } \right ) }
\label{MissLRURand}
\end{align}
for large cache sizes $C_1$, $C_2$.
\label{PropMixture}
\end{prop}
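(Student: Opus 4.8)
The plan is to reduce both statements to the single-cache asymptotics of Section~\ref{sec:single_cache}, exploiting assumption \textbf{(H)}, under which the stream of requests missed at cache $\sharp 1$ is an IRM process. Write $M_r^*(1)$ for the per-object miss probability at cache $\sharp 1$ --- which, for cache $\sharp 1$ taken in isolation, is just the single-cache quantity $M_r(C_1)$ for its own discipline --- and $M^*(1)=\sum_{r\geq 1}M_r^*(1)q_r$ for its average. By the Bayes argument that gives~(\ref{dismissr}), the input to cache $\sharp 2$ is then IRM with popularity $q_r(2)=q_r\,M_r^*(1)/M^*(1)$. One feeds this (non-Zipf) distribution into the single-cache analysis appropriate to the policy run at cache $\sharp 2$, and finally combines the outcome with $M_r^*(1)$ through the product identity~(\ref{independence}) to get $M_r(2)$.

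For part \textbf{II)} (cache $\sharp 1$ LRU, cache $\sharp 2$ RND) I would first compute $q_r(2)$ from the known LRU asymptotics for a Zipf input \cite{JelenkovicAAP1999,Jelenkovic2008PER}, namely $M_r^*(1)\sim \exp(-C_1^\alpha/(\alpha\lambda_\alpha r^\alpha))$ and $M^*(1)\sim A\lambda_\alpha C_1^{1-\alpha}$, which yield
\[
q_r(2)\ \sim\ \frac{C_1^{\alpha-1}}{\lambda_\alpha\, r^\alpha}\,\exp\!\left(-\frac{C_1^\alpha}{\alpha\lambda_\alpha\, r^\alpha}\right),
\]
a distribution whose tail, as $r\to\infty$, is equivalent to $(C_1^{\alpha-1}/\lambda_\alpha)\,r^{-\alpha}$. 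Feeding this into cache $\sharp 2$ repeats the argument of Proposition~\ref{mrAsympt}: Theorem~\ref{LD} applies to $F^{(2)}(z)=\prod_{r\geq 1}(1+q_r(2)z)$, and by Remark~\ref{tailonly} the relevant saddle point is controlled by that power-law tail alone, so Lemma~\ref{SADDLEPOINTC} with $A\rho_\alpha$ replaced by $(C_1^{\alpha-1}/\lambda_\alpha)\rho_\alpha$ gives $\theta_{C_2}\sim \lambda_\alpha C_2^\alpha/(\rho_\alpha C_1^{\alpha-1})$. Plugging the exact $q_r(2)$ and this $\theta_{C_2}$ into the estimate $M_r^*(2)\sim 1/(1+q_r(2)\theta_{C_2})$ of~(\ref{asymptmrC2}) gives the second line of~(\ref{MissLRURand}); the first line then follows on multiplying by $M_r^*(1)$ and rearranging.

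For part \textbf{I)} (cache $\sharp 1$ RND, cache $\sharp 2$ LRU) the same scheme starts from Propositions~\ref{ASYMPT} and~\ref{mrAsympt}, which give $M_r^*(1)\sim \rho_\alpha r^\alpha/(C_1^\alpha+\rho_\alpha r^\alpha)$ and $M^*(1)\sim A\rho_\alpha C_1^{1-\alpha}$, hence $q_r(2)\sim C_1^{\alpha-1}/(C_1^\alpha+\rho_\alpha r^\alpha)$. Cache $\sharp 2$ being LRU, its per-object miss for this IRM input is, via the characteristic-time description of \cite{JelenkovicAAP1999,Jelenkovic2008PER}, $M_r^*(2)\sim \exp(-q_r(2)\,t_{C_2})$ with $t_{C_2}$ solving $\sum_{r\geq 1}(1-e^{-q_r(2)t_{C_2}})=C_2$. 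The key difficulty is that $q_r(2)$ is \emph{not} a pure power law: it is essentially flat, $q_r(2)\approx 1/C_1$, for ranks up to order $C_1$ and only then decays like $r^{-\alpha}$, so the Zipf computation of the characteristic time does not carry over directly. I would resolve the defining equation through the scaling $r=C_1v$, which turns it into $\int_0^\infty(1-\exp(-\tau/(1+\rho_\alpha v^\alpha)))\,dv=C_2/C_1$ with $\tau=t_{C_2}/C_1$; for large cache sizes --- precisely, in the regime where the balance in this equation falls at ranks lying in the power-law tail of $q_r(2)$, that is, when $C_2/C_1\to\infty$ --- the left-hand side is asymptotic to $\rho_\alpha^{-1/\alpha}\Gamma(1-1/\alpha)\,\tau^{1/\alpha}$, whence, using $[\Gamma(1-1/\alpha)]^\alpha=\alpha\lambda_\alpha$ and $\rho_\alpha^{1/\alpha}=(\pi/\alpha)/\sin(\pi/\alpha)$, one obtains $t_{C_2}\sim \rho_\alpha C_2^\alpha/(\alpha\lambda_\alpha C_1^{\alpha-1})$. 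Inserting this and the exact $q_r(2)$ into $M_r^*(2)\sim e^{-q_r(2)t_{C_2}}$ produces the second line of~(\ref{MissRandLRU}), and multiplying by $M_r^*(1)$ gives the first.

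The two steps I expect to demand real care are: (a) dividing $M_r^*(1)$ by $M^*(1)$ while still extracting a clean equivalent for $q_r(2)$, which requires the single-cache asymptotics of Section~\ref{sec:single_cache} to be uniform enough in the rank $r$ (this is where Remark~\ref{tailonly} and sum-to-integral comparisons enter); and (b), more delicately, the characteristic-time computation for the thinned, flattened distribution in part \textbf{I)} --- one must show that for large caches the equation defining $t_{C_2}$ is governed by the Zipf tail of $q_r(2)$ rather than by its flat part, which is precisely what makes the prefactor $\rho_\alpha/(\alpha\lambda_\alpha)$ appear and what fixes the asymptotic regime in which~(\ref{MissRandLRU}) is valid.
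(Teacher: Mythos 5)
Your proposal is correct and follows essentially the same route as the paper: compute the thinned popularity $q_r(2)=q_rM_r^*(1)/M^*(1)$ under \textbf{(H)}, feed it into the single-cache asymptotics of the second cache's policy (the saddle-point estimate (\ref{asymptmrC2}) for RND in part \textbf{II}, the LRU characteristic-time asymptotic for part \textbf{I}), and recombine via (\ref{independence}). Your part \textbf{I} computation of $t_{C_2}$ through the scaling $r=C_1v$ is exactly the paper's evaluation of $\lim_{t\uparrow+\infty}\mathbb{E}[S_2(0,t)]^\alpha/t$ after its change of variables $v=C_1^{\alpha-1}t/(C_1^\alpha+\rho_\alpha u^\alpha)$, and your explicit caveat that the tail of $q_r(2)$ must dominate the flat part is a point the paper leaves implicit in letting $t\uparrow+\infty$ with $C_1$ fixed.
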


\begin{proof}
We follow the same derivation pattern as the proof of Proposition \ref{APPROXNCRK} detailed in \textbf{Appendix \ref{ProofApproxNCRk}}. 
\\
\textbf{I)} When cache $\sharp 1$ uses the RND replacement policy, we know from \textbf{Appendix \ref{ProofApproxNCRk}} that the request process at cache $\sharp 2$ is IRM with popularity distribution
$$
q_r(2) = q_r \frac{M_r^*(1)}{M^*(1)} \sim \frac{C_1^{\alpha-1}}{C_1^\alpha + \rhoa \ral}, \; \; r \geq 1,
$$
and is asymptotically Zipf for large $r$. We then follow the proof of Proposition 6.2 of
\cite{muscaITC2011}. Let $S_2(0,t)$ be the number of different
objects requested at cache  $\sharp 2$ in the time interval
$[0,t]$; it verifies 
\[\mathbb{E}\left[S_2(0,t)\right] = \sum_{r \geq 1} \left( 1 -
  e^{-q_r(2) t} \right ). \]
We then first deduce that 
\begin{equation}
\mathbb{E}\left[S_2(0,t)\right] \ge \int_1^{+\infty}
\left(1 -  e^{-q_u(2) t }\right) \mathrm{d}u.
\label{integral}
\end{equation}
Using the variable change $v = C_1^{\alpha-1}t/(C_1^\alpha + \rhoa u^\alpha)$ in the latter integral, we further obtain 
\begin{align*}
\mathbb{E}\left[S_2(0,t)\right] & 
\ge 
\left(
\frac{C_1^{\alpha-1}t}{\rhoa}\right)^{\frac{1}{\alpha}} \times \\
& \int_0^{\frac{C_1^{\alpha-1}t}{C_1^\alpha + \rhoa}}
\frac{1}{\alpha}\left(1-e^{-v}\right)v^{-1-\frac{1}{\alpha}} 
\left(1 - \frac{C_1v}{t}\right)^{\frac{1}{\alpha}-1} \mathrm{d}v
\,.
\end{align*}
Letting $t \uparrow +\infty$, the monotone convergence theorem for function $v \mapsto \left(1-C_1v/t\right)^{-1+1/\alpha}$ together with a further integration by parts yield
\begin{equation}
\lim_{t \uparrow +\infty}
\frac{\mathbb{E}\left[S_2(0,t)\right]^\alpha}{t}
\ge 
\left(
\frac{C_1^{\alpha-1}}{\rhoa}\right)
\left [ \Gamma \left( 1-\frac{1}{\alpha} \right ) \right]^\alpha.
\label{inequ}
\end{equation}
Starting integral (\ref{integral}) from $u=0$ instead of $u=1$, the latter asymptotic bound is seen to hold also as an upper bound of $\mathbb{E}\left[S_2(0,t)\right]^\alpha/t$, thus showing that (\ref{inequ}) actually holds as an equality. The local per-object miss rate on the second cache for an LRU cache is then 
\begin{align*}
M_r^*(2) & \sim \exp\left[-q_r(2) C_2^\alpha \left(\lim_{t \uparrow +\infty}
\frac{\mathbb{E}\left[S_2(0,t)\right]^\alpha}{t}\right)^{-1} \right]
\end{align*}
which proves expressions (\ref{MissRandLRU}).
\\ 
\indent
\textbf{II)} When cache $\sharp 1$ applies the LRU replacement policy, the local per-object miss rate at cache $\sharp 1$ is known \cite{Jelenkovic2008PER} to equal
$$
M_r^*(1) \sim \exp\left[-\frac{C_1^\alpha}{\ral \left[\Gamma\left(1-\frac{1}{\alpha}\right)\right]^\alpha}\right] = \exp \left [ - \frac{C_1^\alpha}{\alpha \lambda_\alpha r^\alpha }\right ]
$$
and that the local average miss rate is 
$$
M^*(1) \sim \frac{1}{\alpha}
\left[\Gamma\left(1-\frac{1}{\alpha}\right)\right]^\alpha
\frac{A}{C_1^{\alpha -1}} = \frac{\lambda_\alpha A}{C_1^{\alpha -1}}.
$$
Using assumption \textbf{(H)}, it then follows that the input process at cache $\sharp 2$ is IRM with popularity distribution given by
\[
q_r(2) = q_r \frac{M_r^*(1)}{M^*(1)} \sim  
\frac{C_1^{\alpha-1}}{\lambda_\alpha r^\alpha} \exp \left[ -\frac{C_1^\alpha}
{\alpha \lambda_\alpha \ral} \right], \; \; r \geq 1.
\]
Note that this distribution is asymptotically Zipf for large $r$,
with coefficient $A'(2) = A/M^*(1)$. Applying estimate
(\ref{asymptmrC2}) to the above defined distribution $q_r(2)$, $r
\geq 1$, it then follows that 
\[M_r^{*}(2) \sim \left ( 1+q_r(2) \theta'(2) \right ) ^{-1}, \]
where the associated root $\theta'(2)$ is easily estimated by $\theta'(2) = C_2^\alpha / A'(2) \rho_\alpha$ by using Lemma \ref{SADDLEPOINTC}. We hence derive that 
$$
\displaystyle M_r^{*}(2) \sim  
\left ( 1 + \displaystyle \frac{C_2^\alpha}{A'(2) \rhoa} \frac{A M_r^*(1)}{\ral M^*(1)} \right ) ^{-1}
$$
which leads to expressions (\ref{MissLRURand})
\end{proof}

\subsection{Numerical results}

\begin{figure}[!b]
\begin{center}
\subfigure[]{\includegraphics[width=0.41\textwidth]{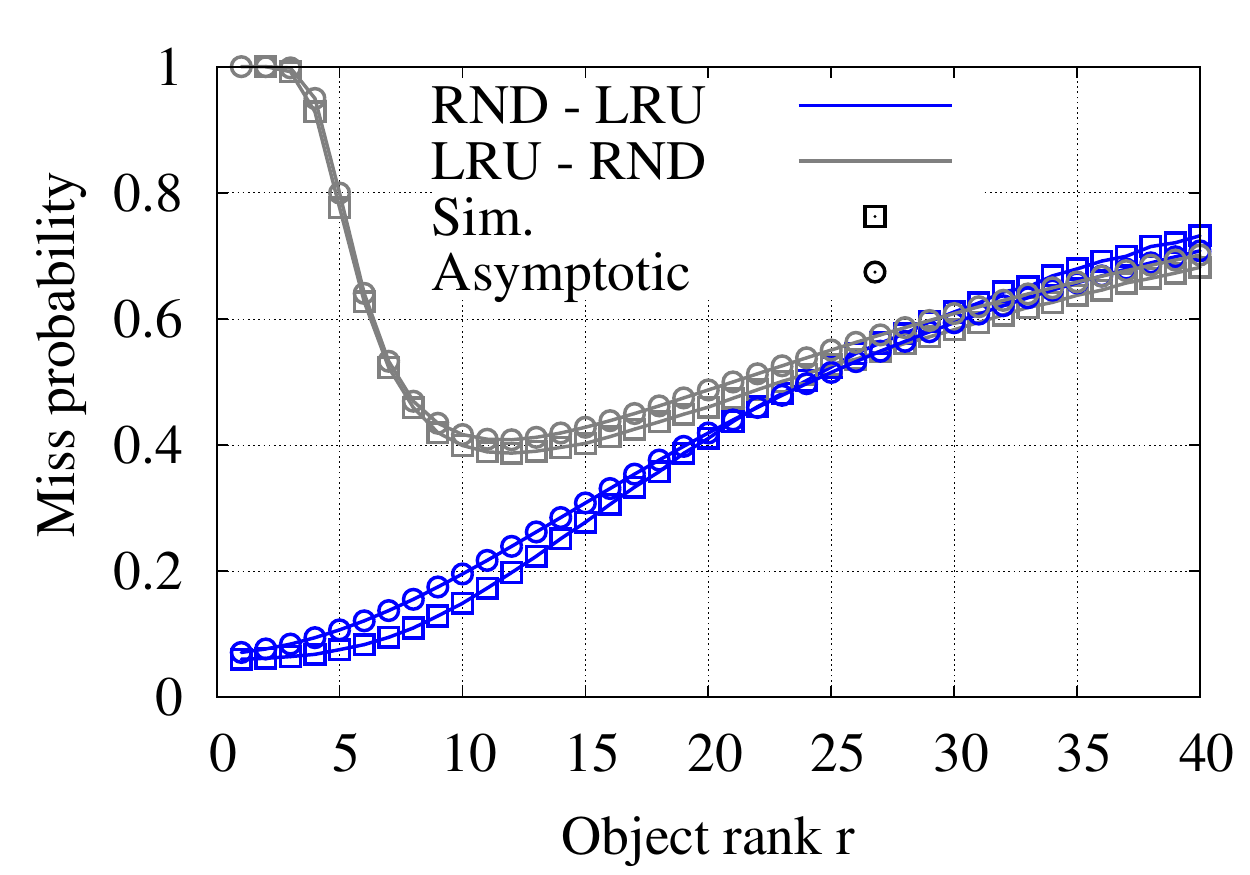}\label{fig:mix_local}}
\subfigure[]{\includegraphics[width=0.41\textwidth]{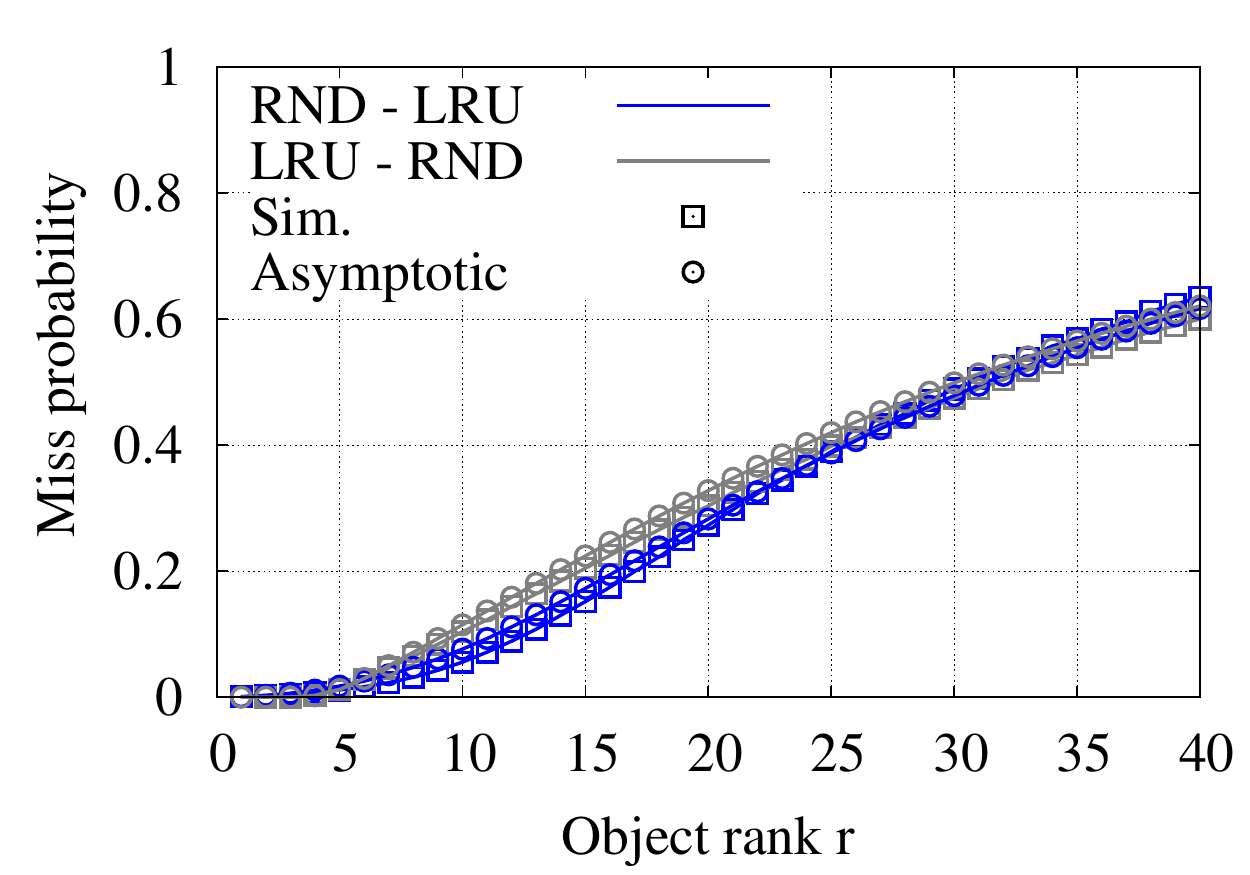}\label{fig:mix_total}}
\subfigure[]{\includegraphics[width=0.41\textwidth]{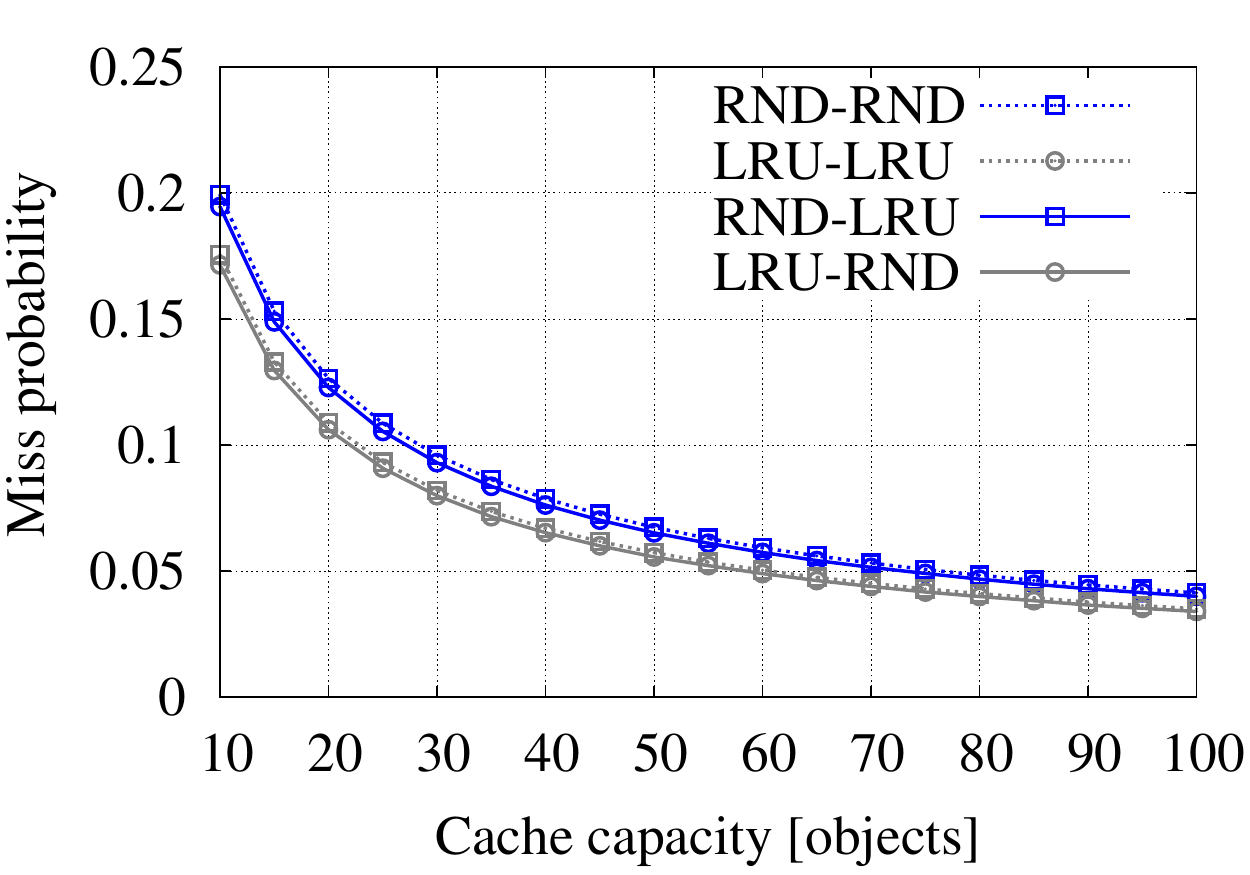}\label{fig:mix_total_C}}
\caption{Mixed homogeneous tree cache results for mixed tree cache networks with $C_1=25, C_2=50$, $\alpha=1.7$: (a) asymptotic of $M^*_r(2)$ (b) asymptotic of $M_r(2)$  (c) Simulation of $M(2)=M(C_1,C_2)$ with $C_1=C_2\leq 100$, $\alpha=1.7$.}
\label{fig:mix}
\end{center}
\end{figure}

We here report numerical and simulation results for mixed
homogeneous tree topologies to show the accuracy of the
approximations presented in previous section, in order to derive
some more general considerations about the mixture of RND and LRU
in a network of caches. According to Section~\ref{TreeSection},
we simulate in fact tree topologies, with 2 leaves with cache
size $C_1$ and one root with cache size $C_2$.

Fig.~\ref{fig:mix_local} reports $M^*_r(2)$ for RND-LRU and
LRU-RND homogeneous tree networks with asymptotics
(\ref{MissRandLRU}) and (\ref{MissLRURand}), respectively. We
first note that the latter provide estimates with reasonable
accuracy. Besides, we observe that the behavior of $M^*_r(2)$ is
in strict relation to the policy used for cache $\sharp 1$. If
cache $\sharp 1$ is RND then, $M^*_r(2)$ has a behavior similar
to that observed in RND caches tandem; similarly, if cache
$\sharp 1$ is LRU, $M^*_r(2)$ behaves as in the case of LRU
caches in tandem.  

This phenomenon has a natural explanation. RND and LRU act in the
same way on objects ranked in the tail of the Zipf popularity
distribution. However, the two replacement policies manage
popular objects in a rather different way, as we already observed
in Section \ref{sec:num_network_cache}. The second level caches see a local popularity that is shaped, by the first level of caches, in the body of the probability distribution.
The portion of the distribution that is affected by such shaping process is determined by the cache size $C_1$ at first level. In the analysis reported in Fig.~\ref{fig:mix}, the first level cache significantly determines the performance of the overall tandem system. In Fig.~\ref{fig:mix_total}, we observe that the total miss probability $M_r(2)$ in the two mixed tandem caches is similar, while the distribution of the objects across the two nodes varies considerably. 

Finally, in Fig.~\ref{fig:mix_total_C} we present simulation results of $M(2)=M(C_1,C_2)$ for all possible configurations in an homogeneous tree network in which $C_1=C_2\leq 100$. Observe that LRU-RND tree cache network achieve slightly better performance than the LRU-LRU system at least with Zipf shape parameter $\alpha = 1.7$. This behaviour would suggest to prefer LRU at the first cache beacuse it performs better in terms of Miss probability and RND at the second one in order to save significant processing time.

\section{Conclusion}\label{sec:conclusion}

The recent technological evolution of memory capacities, as illustrated by the deployment of CDNs and the proposition of new information-centric architectures where caching becomes an intrinsic network property, raise new interests on caching studies.

In this paper, we have studied the RND replacement policy, where
objects to be removed are chosen uniformly at random. Assuming
that the content popularity follows a Zipf law with parameter
$\alpha > 1$, and that content requests are IRM, we prove that
for large cache size $C$, the miss rate is asymptotically
equivalent to $A \rho_{\alpha} C^{1-\alpha}$ (see Lemma
\ref{ASYMPTF} and Proposition \ref{ASYMPT}). This shows that the
difference between LRU and RND caches is independent of the cache
size and only depends on coefficient $A\rho_{\alpha}$. These
results  
are extended to typical network topologies, namely tandems and
homogeneous trees, under the assumption that requests are IRM at
any node. The case of mixed policies (RND at one network level
and LRU at the other one) is also considered. Simulations show
that the IRM assumption applied to network topologies is
reasonable and that provided estimates are accurate. 

Our results suggest that the performance of RND is reasonably close to that of LRU. As a consequence, RND is a good candidate for high-speed caching when the complexity of the replacement policy becomes critical. In the presence of a hierarchy of caches, caches at deep levels (i.e. access networks) typically serve a relatively small number of requests per second which can be easily sustained by a cache running LRU; LRU policy should consequently be implemented at the bottom level since it provides the best performance. Meanwhile, higher-level caches see many aggregated requests and should therefore use the RND policy which yields similar performance (as to second-level cache) while being less computationally expansive. 

We have assumed in this paper that the parameter $\alpha$ of the Zipf distribution is larger than $1$, and the total number of available objects is infinite. As an object of further study, we first intend to explore the case where $\alpha \leq 1$, with a finite number of objects. Besides, since Zipf popularity distributions do not represent all types of Internet traffic, we also intend to analyze the performance of RND caches when the content requests follow a light-tailed (e.g. Weibull) distribution. Finally, all results derived in this paper hold for i.i.d. content requests. Admittedly, actual traces show that requests are correlated, both in time and space. The definition of an accurate and realistic model which can take these correlations into account, as well as the extension of the present results to such a request model is also on our research agenda. 

\section*{Acknowledgements}
This work has been partially funded by the French National Research Agency (ANR), 
CONNECT project, under grant number ANR-10-VERS-001, and by the European FP7 IP
project SAIL under grant number 257448. We thank Philippe Olivier
for his useful comments. 

\bibliographystyle{abbrv}
\bibliography{sigproc}

\appendix

\setcounter{tocdepth}{-1}

\section{Proof of Theorem \ref{LD}}
\label{LargeDeviations}

\indent \textit{(i)} Using (\ref{genf}), equation (\ref{saddlepoint0}) reduces to $g(z) = C$ where
\begin{equation}
g(z) = z \frac{F'(z)}{F(z)} = \sum_{j \geq 1} \frac{q_j z}{1 + q_j z}.
\label{saddlepoint}
\end{equation}
Continuous function $g:z \in [0,+\infty[ \; \rightarrow g(z) \in [0,+\infty[$ vanishes at $z = 0$, is strictly increasing on $[0, +\infty[$ and tends to $+\infty$ when $z \uparrow +\infty$. Equation (\ref{saddlepoint0}) has consequently a unique positive solution $\theta_C$. Note that $\theta_C$ tends to $+\infty$ with $C$ since $g(z) \leq \Sigma_{j \geq 1} (q_j z) = z$, hence $\theta_C \geq C$. 
\\
\indent \textit{(ii)} Consider the random variable $X_C$ with distribution
\begin{equation}
\mathbb{P}(X_C = x) = \frac{G(x)}{F(\theta_C)}\theta_C^x, \; \; x \geq 0,
\label{distY}
\end{equation}
where $\theta_C$ satisfies (\ref{saddlepoint0}); note that definition (\ref{distY}) for $X_C$ is equivalent to
\begin{equation}
G(x) = \frac{F(\theta_C)}{\theta_C^x}\mathbb{P}(X_C = x), \; \; x \geq 0.
\label{distX}
\end{equation}
Using definition (\ref{distY}), the generating function of random variable $X_C$ is  $z \mapsto F(z\theta_C)/F(\theta_C)$; in view of (\ref{saddlepoint0}), the expectation of variable $X_C$ is then
$$
\mathbb{E}(X_C) = \frac{\mathrm{d}}{\mathrm{d}z} \frac{F(z \theta_C)}{F(\theta_C)}\Big |  _{z = 1} = \theta_C \frac{F'(\theta_C)}{F(\theta_C)} = C
$$
so that random variables $Y_C = (X_C - C)/\sqrt{C}$, $C \geq 0$, are all centered. Besides, the Laplace transform of variable $Y_C$ is given by
$$
\mathbb{E}(e^{-sY_C}) = e^{s\sqrt{C}} \mathbb{E}(e^{-sX_C/\sqrt{C}}) = e^{s\sqrt{C}}\frac{F(\theta_C e^{-s/\sqrt{C}})}{F(\theta_C)}
$$
for all $s \in \mathbb{C}$. By L\'evy's continuity theorem (\cite{Rao}, Theorem 4.2.4), assumption (\ref{Gaussapprox1}) entails that variables $Y_C$ converge in distribution when $C \uparrow +\infty$ towards a centered Gaussian variable with variance $\sigma^2$; moreover, assumption (\ref{Gaussapprox2}) ensures that the conditions of Chaganty-Sethuraman's theorem (\cite{ChaganSeth}, Theorem 4.1) hold so that  $\mathbb{P}(X_C = C) = \mathbb{P}(Y_C = 0)$ is asymptotic to
\begin{equation}
\mathbb{P}(X_C = C) \sim 1/\sigma \sqrt{2\pi C}
\label{Gaussapprox0}
\end{equation}
as $C \uparrow +\infty$. Equation (\ref{distX}) for $x = C$ and asymptotic  (\ref{Gaussapprox0}) together provide estimate (\ref{Saddlepointfin}) for $G(C)$
\qed

\section{Proof of Lemma \ref{ASYMPTF}}
\label{ProofAsymptF}

Let $q(x) = A/x^\alpha$ and $f_z(x) = \log(1 + q(x)z)$ for any real $x \geq 1$ and $z \in \mathbb{C}$; definitions (\ref{genf}), (\ref{Zipf}) and the above notation then entail that 
$$
\log F(z) = \sum_{r \geq 1} f_z(r);
$$
function $\log F$ is analytic in the domain $\mathbb{C} \setminus \mathbb{R}^-$. For given $z \in \mathbb{C} \setminus \mathbb{R}^-$ and integer $J \geq 1$, the Euler-Maclaurin summation formula \cite{Abram} reads
\begin{align}
\sum_{r = 1}^J f_z(r) = & \int_{1}^{J} f_z(x) \mathrm{d}x + \frac{1}{2} \left [ f_z(J) +  f_z(1) \right ] +
\nonumber \\
& \frac{1}{12} \left [ f_z'(J) - f_z'(1) \right ] + \frac{T_z(J)}{6}
\label{EulerMac}
\end{align}
with
$$
T_z(J) = \int_{1}^J B_3(\{x\})f_z^{(3)}(x) \mathrm{d}x,
$$
where $B_3(x) = x(x-1)(2x-1)/2$ is the third Bernoulli polynomial and $\{x\}$ denotes the fractional part of real $x$; derivatives of $f_z$ are taken with respect to $x$. Consider the behavior of the r.h.s. of (\ref{EulerMac}) as $J$ tends to infinity. We first have $f_z(1) = \log(1+Az)$ and $f_z(J) = O(J^{-\alpha})$ for large $J$; differentiation entails
$$
f_z'(1) = - \frac{\alpha Az}{1 + Az}
$$
and $f_z'(J) = O(J^{-\alpha - 1})$ for large $J$. Differentiating twice again with respect to $x$ shows that the third derivative of $f_z$ is $O(x^{-\alpha - 3})$ for large positive $x$, and is consequently integrable at infinity. Letting $J$ tend to infinity in (\ref{EulerMac}) and using the above observations together with the boundedness of periodic  function $x \geq 1 \mapsto B_3(\{x\})$, we obtain
\begin{equation}
\log F(z) = \int_{1}^{+\infty} f_z(x) \mathrm{d}x + \frac{1}{2} \log(1+Az) + \frac{\alpha Az}{12(1 + Az)} + \frac{T_z}{6}
\label{EulerMac1}
\end{equation}
where
\begin{equation}
T_z = \int_{1}^{+\infty} B_3(\{x\}) f_z^{(3)}(x) \mathrm{d}x.
\label{Tz}
\end{equation}
Now, considering the first integral in the r.h.s. of (\ref{EulerMac1}), the variable change $x = t(Az)^{1/\alpha}$ gives
\begin{align}
\int_{1}^{+\infty} f_z(x) \mathrm{d}x & = (Az)^{1/\alpha} \Big [ L - \int_0^{1/(Az)^{1/\alpha}} \log \left ( 1 + \frac{1}{t^\alpha} \right ) \mathrm{d}t \Big ]
\nonumber \\
& = L(Az)^{1/\alpha} - \log(Az) - \alpha + o(1)
\label{EulerMac2}
\end{align}
where $L$ is the finite integral \cite{Abram}
\begin{equation} 
L = \int_{0}^{+\infty} \log \left ( 1 + \frac{1}{t^\alpha} \right ) \mathrm{d}t 
= \frac{\pi}{\sin(\pi/\alpha)} = \alpha \rho_\alpha^{1/\alpha}
\label{intL}
\end{equation}
with $\rho_\alpha$ introduced in (\ref{prefP}) for $\alpha > 1$, and where
\begin{equation}
\int_0^{1/(Az)^{1/\alpha}} \log \left ( 1 + \frac{1}{t^\alpha }\right ) \mathrm{d}t = \frac{\log(Az)}{(Az)^{1/\alpha}} + \frac{\alpha}{(Az)^{1/\alpha}} + o(1).
\label{intLbis}
\end{equation}
Gathering (\ref{intL}) and (\ref{intLbis}) provides expansion (\ref{EulerMac2}). Using the explicit expression of $f^{(3)}_z$, the dominated convergence theorem finally shows that when $z \uparrow +\infty$, remainder $T_z$ in (\ref{Tz}) tends to some finite constant $t_\alpha$ depending on $\alpha$ only. Gathering terms in 
(\ref{EulerMac1})-(\ref{EulerMac2}), we are finally left with expansion (\ref{asympt}) with constant $S_\alpha = -\alpha + \alpha/12 + t_\alpha/6$. Some further calculations would provide $S_\alpha = -\alpha \log(2\pi)/2$, although this actual value does not intervene in our discussion
\qed

\section{Proof of Proposition \ref{SADDLEPOINTC}}
\label{ProofSaddlepointC}

Recall definition (\ref{saddlepoint}) of function $g$ and write equivalently
$$
g(z) = \sum_{r \geq 1} g_z(r).
$$
where we let $g_z(x) = Az(x^\alpha + Az)^{-1}$. The Euler-Maclaurin summation formula \cite{Abram} applies again in the form
\begin{align}
\sum_{r = 1}^J g_z(r) = & \int_{1}^{J} g_z(x) \mathrm{d}x + \frac{1}{2} \left [ g_z(J) +  g_z(1) \right ] + 
\nonumber \\
& \frac{1}{12} \left [ g_z'(J) - g_z'(1) \right ] + \frac{W_z(J)}{6}
\label{EulerMacbis}
\end{align}
for given $z\in \mathbb{C} \smallsetminus \mathbb{R}^-$, integer $J \geq 1$ and where 
$$
\mid W_z(J) \mid \; \leq \frac{12}{(2\pi)^{2}} \int_{1}^J \mid g_z^{(3)}(x) \mid \mathrm{d}x
$$
(derivatives of $g_z$ are taken with respect to variable $x$). Consider the behavior of the r.h.s. of (\ref{EulerMacbis}) as $J$ tends to infinity. Firstly, $g_z(1) = Az/(Az+1)$ and $g_z(J) = O(J^{-\alpha})$ as $J \uparrow +\infty$; secondly, $g_z'(1) = -A\alpha z (1+Az)^{-2}$ together with $g_z'(J) = O(J^{-\alpha - 1})$ for large $J$. Differentiating twice again shows that the third derivative of $g_z$ is $O(x^{-\alpha - 3})$ for large positive $x$ and is consequently integrable at infinity. Letting $J$ tend to infinity in (\ref{EulerMacbis}) therefore implies equality
\begin{equation}
g(z) = \int_{1}^{+\infty} g_z(x) \mathrm{d}x + \frac{1}{2}\frac{Az}{Az+1} +  \frac{1}{12}\frac{A\alpha z}{(1+Az)^2} + \frac{W_z}{6}
\label{EulerMacbis1}
\end{equation}
where
$$
\mid W_z \mid \; \leq \frac{12}{(2\pi)^{2}} \int_{1}^{+\infty} \mid g_z^{(3)}(x) \mid \mathrm{d}x.
$$
Using the explicit expression of the derivative $g_z^{(3)}$, it can be simply shown that  
\begin{equation}
\mid W_z \mid \; = O(z^{-2/\alpha}), \; \mid W_z \mid \; = O(z^{-1}\log z), \; \mid W_z \mid \; = O(z^{-1})
\label{Wz}
\end{equation}
if $\alpha > 2$, $\alpha = 2$ and $1 < \alpha < 2$, respectively. Now, considering the first integral in the r.h.s. of (\ref{EulerMacbis1}), the variable change $x = t(Az)^{1/\alpha}$ gives
\begin{equation}
\int_{1}^{+\infty} g_z(x) \mathrm{d}x = I(Az)^{1/\alpha} - 1 + O \left( \frac{1}{z} \right )
\label{EulerMacbis2}
\end{equation}
where $I = L/\alpha = \rho_\alpha^{1/\alpha}$, with integral $L$ introduced in (\ref{intL}) for $\alpha > 1$. Expanding all terms in powers of $z$ for large $z$, it therefore follows from (\ref{EulerMacbis1}) and (\ref{EulerMacbis2}) that
$$
g(z) = \rho_\alpha^{1/\alpha}(Az)^{1/\alpha} - \frac{1}{2} + W_z 
$$
with $W_z$ estimated in (\ref{Wz}). For large $C$, equation (\ref{saddlepoint}), \ie $g(\theta_C) = C$, then reads
\begin{align}
A\theta_C & = \left [ \frac{C}{I} + \frac{1}{2I} + O(W_{\theta_C}) \right ]^\alpha 
\nonumber \\
& = \left ( \frac{C}{I} \right )^\alpha + \frac{\alpha}{2I^\alpha}C^{\alpha - 1} + O(C^{\alpha - 2})
\label{thetaC1}
\end{align} 
for $\alpha > 2$ since (\ref{Wz}) implies $W_{\theta_C} = O(\theta_C^{-2/\alpha}) = O(C^{-2})$ in this case. The case $1 < \alpha < 2$ gives a similar expansion since the remainder is $W_{\theta_C}/C = O(C^{-\alpha}/C) = O(C^{-\alpha - 1})$. Finally, the case $\alpha = 2$ yields
\begin{equation}
A\theta_C = \left [ \frac{C}{I} + \frac{1}{2I} + O(W_{\theta_C}) \right ]^2 = \left ( \frac{C}{I} \right )^2 + O \left ( \frac{\log C}{C} \right ).
\label{thetaC2}
\end{equation} 
Gathering results (\ref{thetaC1})-(\ref{thetaC2}) finally provides expansions (\ref{zC}) for $\theta_C$ 
\qed

\section{Proof of Proposition \ref{ASYMPT}}
\label{ProofAsympt}

We here verify that conditions (\ref{Gaussapprox1}) and (\ref{Gaussapprox2}) of Theorem \ref{LD} are satisfied in the case of a Zipf popularity distribution with exponent $\alpha > 1$. Let us first establish convergence result (\ref{Gaussapprox1}). Using Lemma \ref{ASYMPTF}, we readily calculate
\begin{align}
& e^{s\sqrt{C}} \frac{F(\theta_C e^{-s/\sqrt{C}})}{F(\theta_C)} = \; e^{s\sqrt{C}} \exp \Big [ \alpha(\rho_\alpha A\theta_C)^{1/\alpha} \times  
\nonumber \\
& \left (e^{-s/\alpha \sqrt{C}} - 1 \right ) + \frac{s}{2 \sqrt{C}} + \varepsilon(\theta_C e^{-s/\sqrt{C}}) - \varepsilon(\theta_C) \Big ]
\label{CV}
\end{align}
for any given $s \in \mathbb{C}$ with $\Re(s) = 0$, $\mid \Im(s) \mid \; \leq a$ and where $\varepsilon(\theta) \rightarrow 0$ as $\theta \uparrow +\infty$. By Lemma \ref{SADDLEPOINTC}, we further obtain $\alpha(\rho_\alpha A\theta_C)^{1/\alpha} = \alpha C + \alpha \rho_\alpha r_C + o(r_C)$ and the expansion of $e^{-s/\alpha \sqrt{C}} - 1$ at first order in $1/C$ entails that
$$
\alpha (\rho_\alpha A\theta_C)^{1/\alpha} \left (e^{-s/\alpha \sqrt{C}} - 1 \right ) = -s \sqrt{C} + \frac{s^2}{2\alpha} + O \left ( \frac{1}{\sqrt{C}} \right );
$$
letting $C$ tend to infinity, we then derive from (\ref{CV}) and the previous expansions that
$$
e^{s\sqrt{C}} \frac{F(\theta_C e^{-s/\sqrt{C}})}{F(\theta_C)} \rightarrow \exp \left ( \frac{s^2}{2\alpha} \right )
$$
so that assumption (\ref{Gaussapprox1}) is satisfied with $\sigma^2 = 1/\alpha$.
\\
\indent
Let us finally verify boundedness condition (\ref{Gaussapprox2}). Rephrasing (\ref{CV}) for $s = -iy \sqrt{C}$, we have
\begin{align}
\left (  \frac{F(\theta_C e^{iy})}{F(\theta_C)} \right ) ^{1/C} = & \exp \Big [ \frac{\alpha (\rho_\alpha A\theta_C)^{1/\alpha}}{C} \left (e^{iy/\alpha} - 1 \right ) - \frac{iy}{2 C} 
\nonumber \\
& + \frac{\varepsilon(\theta_C e^{iy}) - \varepsilon(\theta_C)}{C} \Big ]
\end{align}
for any $y \in \mathbb{R}$. But as above, $\alpha (\rho_\alpha A\theta_C)^{1/\alpha}/C$ tends to the constant $\alpha$ when $C \uparrow +\infty$ so that
\begin{equation}
\left | \frac{F(\theta_C e^{iy})}{F(\theta_C)} \right | ^{1/C} \leq \mid h(y) \mid ^\beta \times \Big | \exp \Big [ \frac{\varepsilon(\theta_C e^{iy}) - \varepsilon(\theta_C)}{C} \Big ] \Big |
\label{CVbis}
\end{equation}
for some positive constant $\beta$ and where
\balancecolumns
$$
h(y) = \left | \exp \left ( e^{iy/\alpha} - 1 \right ) \right | = \exp \left ( \cos \left (\frac{y}{\alpha} \right ) - 1 \right ).
$$

Function $h$ is continuous, even and given $\delta > 0$, $h$ is decreasing on interval $[\delta, \pi]$ since $\alpha > 1$, hence $h(y) \leq h(\delta) = \eta_\delta < 1$ for $\delta \leq y \leq \pi$. Using the estimates derived in Appendix \ref{ProofAsymptF}, it is further verified that, given any compact $\mathcal{K} \subset \mathbb{C}$ not containing the origin, we have $\lim_{C \uparrow +\infty} \varepsilon(\theta_C u) = 0$ uniformly with respect to $u \in \mathcal{K}$; this entails that the exponential term in the right-hand side of (\ref{CVbis}) tends to 1 when $C \uparrow +\infty$ uniformly with respect to $u = e^{iy}$, $y \in [\delta,\pi]$. We finally conclude that condition (\ref{Gaussapprox2}) is verified
\qed

\section{Proof of Proposition \ref{APPROXNCRK}}
\label{ProofApproxNCRk}

Let $q_r(\ell + 1)$, $r \geq 1$, denote the distribution of the input process at cache $\sharp (\ell+1)$, $\ell \geq 1$. By the same reasoning than that performed in Lemma \ref{GENr}, we can write
\begin{equation}
q_r(\ell+1) = q_r(\ell) \frac{M^*_r(\ell)}{M^*(\ell)} = q_r \frac{M^*_r(\ell)...M^*_r(1)}{M^*(\ell)...M^*(1)}
\label{NCRk1}
\end{equation}
for all $r \in \mathbb{N}$, where $M^*_r(\ell)$ (resp. $M^*(\ell)$) is the local miss probability of a request for object $r$ at cache $\sharp \ell$ (resp. the averaged local miss probability for all objects requested at cache $\sharp \ell$) introduced in (\ref{averagelocal}) and with notation $q_r = q_r(1)$. As $M_r^*(\ell') \rightarrow 1$ for all $\ell' \leq \ell$ when  $r \uparrow +\infty$, we deduce from (\ref{NCRk1}) that
$$
q_r(\ell+1) \sim \frac{A(\ell)}{r^\alpha}
$$
when $r \uparrow +\infty$, where $A(\ell) = A/M^*(1)M^*(2)...M^*(\ell)$. Apply then estimate (\ref{asymptmrC2}) to obtain
\begin{equation}
M^*_r(\ell+1) \sim \frac{1}{1 + \theta(\ell + 1) q_r(\ell+1)}
\label{NCRk3}
\end{equation}
where $\theta(\ell + 1) \sim C_{\ell+1}^\alpha/A(\ell) \rho_\alpha$
and with $q_r(\ell+1)$ given by (\ref{NCRk1}); using the value of $A(\ell)$ above and the definition $q_r = A/r^\alpha$, the product $\theta(\ell + 1) q_r(\ell+1)$ reduces to 
\begin{align}
\theta(\ell + 1) q_r(\ell+1) & \sim \frac{C_{\ell+1}^\alpha}{A(\ell) \rho_\alpha} \cdot q_r \frac{M^*_r(\ell)...M^*_r(1)}{M^*(\ell)...M^*(1)} 
\nonumber \\
& = \frac{C_{\ell+1}^\alpha}{\rho_\alpha r^\alpha}M^*_r(\ell)...M^*_r(1).
\label{NCRk4}
\end{align}
Writing then $M_r(\ell+1) = M_r(\ell) M^*_r(\ell+1)$, asymptotics (\ref{NCRk3}) and (\ref{NCRk4}) together yield
$$
M_r(\ell+1) \sim M_r(\ell) \left [ 1 + \frac{C_{\ell+1}^\alpha}{\rho_\alpha r^\alpha}M^*_r(\ell)...M^*_r(1) \right ]^{-1}
$$
so that
$$
\frac{1}{M_r(\ell+1)} \sim \frac{1}{M_r(\ell)} + \frac{C_{\ell+1}^\alpha}{\rho_\alpha r^\alpha}
$$
since $\Pi_{j=1}^\ell M^*_r(j) = M_r(\ell)$ after (\ref{independence}); the latter recursion readily provides expression (\ref{Miss-r-NCRk}) for $M_r(\ell)$, $1 \leq \ell \leq K$. 
\\
Using relation $M_r(\ell+1) = M_r(\ell) M^*_r(\ell+1)$ again together with expression (\ref{Miss-r-NCRk}) of $M_r(\ell)$ provides in turn expression (\ref{Miss-r-NCRk})
for $M^*_r(\ell)$, $1 \leq \ell \leq K$  
\qed

\end{document}